\theoremstyle{thmstyleone}%
\newtheorem{theorem}{Theorem}
\theoremstyle{thmstyletwo}%
\newtheorem{example}{Example}%
\newtheorem{remark}{Remark}%
\theoremstyle{thmstylethree}%
\newtheorem{definition}{Definition}%
\theoremstyle{thmstylefour}%
\newtheorem{lemma}{Lemma}%
\theoremstyle{thmstylefive}%
\newtheorem{corollary}{Corollary}%
\begin{document}

\title[Article Title]{SXVCS: An XOR-based Visual Cryptography Scheme without Noise via Linear Algebra}


\author*[1]{\fnm{Zizhuo} \sur{Wang}}\email{wangzzh2020@lzu.edu.cn}

\author[1]{\fnm{Ziyang} \sur{Xu}}\email{xuzy20@lzu.edu.cn}

\author[1]{\fnm{Xingxing} \sur{Jia}}\email{jiaxx@lzu.edu.cn}

\affil*[1]{\orgdiv{School of Mathematics and Statistics}, \orgname{Lanzhou University}, \orgaddress{\street{222 South Tianshui Road}, \city{Lanzhou}, \postcode{730000}, \state{Gansu}, \country{P.R. China}}}




\abstract{Visual Cryptography Schemes (VCS) based on the "XOR" operation (XVCS) exhibit significantly smaller pixel expansion and higher contrast compared to those based on the "OR" operation. Moreover, the "XOR" operation appears to possess superior qualities, as it effectively operates within a binary field, while the "OR" operation merely functions as a ring with identity. Despite these remarkable attributes, our understanding of XVCS remains limited. 
Especially, we have done little about the noise in the reconstructed image up to now.
In this paper, we introduce a novel concept called Static XVCS (SXVCS), which completely eliminates the noise in the reconstructed image. We also demonstrate that the equivalent condition for perfect white pixel reconstruction is simply the existence of SXVCS. For its application, we naturally propose an efficient method for determining the existence of XVCS with perfect white pixel reconstruction. Furthermore, we apply our theorem to $(2,n)$-XVCS and achieve the optimal state of $(2,n)$-XVCS.}

\keywords{VCS, XOR, Static XVCS, Linear Algebra, perfect white pixel reconstruction}



\maketitle

\section{Introduction}\label{sec1}

	\par{Visual cryptography scheme (VCS) was formally proposed by Naor and Shamir in 1994. It encoded a binary secret image (usually denoted by SI) into $n$ transparencies, called shares and each pixel in $S$ was encoded into $m$ sub-pixels in each of the $n$ shares. The share images are printed on the transparent sheets and then distributed secretly to $n$ related participants. If and only if no less than $k$ shares are superimposed, the secret image $S$ is revealed, otherwise, nothing but a random image is obtained. Such a scheme is called a $(k,n)$-VCS. VCSs own the stacking-to-see decryption property such that they are widely used in image encryption \cite{naor1994, Adhikari2014, Drost1996, Wang2011, Bao2017, Jia2018Colla}, visual authentication \cite{1997Auth, Yan2021Auth, 2016Design, 2013Secret, Fu2018} and information hiding \cite{Yan2020, Yuan2014, 2017Singh, Yang2021, Wu2021}.}
	
	
	\par{$(k,n)$-VCSs have a very rich literature since their first appearance in 1994 \cite{naor1994}. However, $(k,n)$-VCSs have strict restrictions on the access structures as they entitle every participant the same privilege, which is not enough to process complex access structures. Therefore, researchers have studied VCSs for general access structures. In 1996, Ateniese et al. constructed two OVCSs for general access structure for the first time. Although the researches on OVCSs with general access structures are far less than $(k,n)$-VCS, its theorem models are deeply studied including finding the optimal combinations of the basis matrices' componentsby using interger linear programming (ILP) \cite{Shyu2015}, linear algebra \cite{Adhikari2014}, the construction by recursively calling (2,2)-VCS \cite{Liu2010}, as well as the construction from combinatorics \cite{Giuse1996}. Among them, the work of Adhikari et al. serves as a cornerstone for our study. Utilizing linear algebra as a fundamental mathematical tool, Adhikari et al. harnessed the idea that the collection of all solutions to a system of linear homogeneous equations over the binary field forms a vector space over the base field. This approach facilitated the development of a VCS construction for specific general access structures.}

	\par{The foundational mathematical operation for the physical implementation of the aforementioned schemes is the Boolean OR operation, which leads to the classification of OR-based VCS (OVCS). However, OVCS suffers from the huge share size (reflected by pixel expansion) and very poor quality (reflected by contrast) of the recovered secret image. Several papers \cite{blundo1999contrast,eisen2002threshold, Shyong2011, Shyu2015} have tried to minimize the pixel expansion and maximize the contrast. Simultaneously, XOR-based VCS (XVCS) has been investigated to attain improved properties like superior contrast and resolution, advancing the comprehension and practicality of VCS for general access structures. By leveraging lightweight devices and emerging technologies, including cell phones, smart devices, and flexible screens, the implementation of XVCS becomes increasingly feasible for diverse practical applications. As a result, XVCS possesses significant potential for extensive adoption in the future.}
	
	\par{In the construction of XVCS, Tuyls et al. \cite{Tuyls2005} proposed threshold XVCS, presenting various $(2, n)$ and $(k, n)$ schemes. Liu et al. \cite{Liu2010} proposed the first general access structure XVCS, employing a $(2, 2)$-XVCS share generation algorithm repeatedly, even though one participant needed to hold multiple shares. Fu et al. \cite{fu2014optimal} proposed a necessary condition for the optimality of pixel expansion in traditional XVCS for general access structures, and they confirmed the existence of a perfect $(n, n)$-XVCS. All these XVCS schemes address non-monotonicity in access structure, where stacking a superset's shares might not disclose the secret. However, given that access structures are typically public, XVCS can be defined to allow qualified subsets of participants to reconstruct the secret image.}

    \par{Notably, Gang Shen et al. \cite{Shen2017} made significant progress in adapting Adhikari's linear algebra method from OVCS to XVCS. By using this technique, they achieved perfect contrast and pixel expansion, and provided sufficient and necessary conditions for this construction. However, constructing such an XVCS requires such strict conditions that few general access structures can meet them. Although the researchers attempted to address this issue through a "region-by-region" construction method, the share of this approach consists of many regions, thus introducing a new form of "pixel expansion" compared to the original image.
    Therefore, there is still much room for improvement in constructing the optimal XVCS using linear systems.
    }
    
    \par{ In addition, we noticed that most previous studies do not foucs on the noise in the reconstructed images. Considering that a noise-free reconstructed image can greatly enhance the effectiveness of secret recovery, the existence of a noise-free construction for VCS, and whether there is a construction that can simultaneously achieve optimal pixel expansion, optimal contrast, and noise-free reconstruction, are indeed worthy new research questions.}

	\subsection{Our Contributions}
	

    \par{In this paper, we extend the methods of previous studies \cite{Adhikari2014}\cite{Shen2017}, constructing a noise-free XVCS, referred to as a Static XOR-based Visual Cryptography Scheme (SXVCS), through two $AX=B$ type linear matrix systems(we sometimes call it as linear equations systems as well since they are so much similar). We further investigate its relationships with other XVCS, as well as its applications in some specific access structures. All the XVCS that we construct in this paper adhere to the basic definition \cite{Tuyls2005, Liu2010} without the need for probabilistic, multi-level grayscale, or multi-region constructions, and each participant only needs to hold one share.
    
    Different from previous studies, our approach focus on eliminating the noise in the reconstructed images of XVCS. Firstly, we examine the methods and conditions for constructing XVCS from two matrix systems. Subsequently, we expand these methods and conditions into the construction based on $2k$ matrix systems. We then demonstrate that an XVCS for each general access structure can be constructed through matrix systems, with the same image reconstruction performance.
    Next, we provide and prove the necessary and sufficient conditions for constructing an SXVCS, i.e., our theorem on the equivalence between a Static XVCS and a Perfect White-reconstructed XVCS (PW-XVCS). Concurrently, using these conclusions, we construct a $(2,n)$-XVCS that achieves optimal pixel expansion, optimal average contrast, and reconsted image without random noise , with a algorithm with time complexity of $O(n\log n)$. All detailed descriptions and proofs can be found in the main text.}

	\subsection{Organization of the paper}
    

    \par{The rest of the paper is organized as follows. In Section \ref{s2}, we provide necessary preliminaries and adopt simplified notations for ease of understanding, concurrently introducing the concepts of SXVCS and SemiSXVCS. In Section \ref{s3}, we present a collection of theorems related to SXVCS, complemented by illustrative examples. These results share similarities with the discussions in \cite{Adhikari2014} and \cite{Shen2017}, albeit approached from a fresh viewpoint. In Section \ref{s4}, we expand the construction of XVCS using $2k$ linear systems, where we establish the equivalence between the existence of SXVCS and SemiSXVCS. We also apply this key theorem to quickly determine whether an access structure possesses a "perfect white pixel reconstruction." In Section \ref{s5}, we addresses the Optimal $(2,n)$-XVCS, referring to a $(2,n)$-SXVCS with optimal pixel expansion and optimal average contrast. In addition, we formulate an efficient algorithm characterized by a time complexity of $O(n\log n)$. In Section \ref{s6}, we present experimental results and comparisons to provide empirical support for the theoretical frameworks discussed in the preceding sections. Finally, in Section \ref{s7}, we conclude the paper and suggest potential directions for future research.}
	
	\par{Figure \ref{diagram} represents the primary structure of the theorems discussed in this paper.}

\begin{figure*}\caption{The diagram of this paper's theorems}\label{diagram}
	\begin{tikzpicture}[box/.style={draw=black,rounded corners},edge from parent/.style={green,thick,draw}]
		\node[box] (1) at(1,0) {\large $2$ Linear Systems};
		\node[box] (11) at(-5,0) {\large  \cite{Adhikari2014} and \cite{Shen2017}};
		\draw[->] (11)--node[pos=0.5,above,sloped]{\scriptsize Add Columns} (1);
		\node[box] (3) at(6,0) {\large  $2k$ Linear Systems};
		\draw[->] (1)--node[pos=0.5,above,sloped]{\scriptsize Add Linear} node[pos=0.5,below,sloped]{\scriptsize Systems} (3);
		
		\node[box] (4) at(1,-2) {\large  SXVCS};
		\node[box] (5) at(6,-2) {\large  All XVCSs};
		\draw[->] (1) -- (4);
		\draw[->] (3) -- (5);
		\draw[->] (4)-- node[pos=0.5,above,sloped]{\scriptsize $\subseteq$} (5);
		\node[box] (9) at(-5,-2) {\large  SXVCS with Perfect White Pixel};
		\node[box] (8) at(1,-4.5) {\large  Semi-SXVCS};
		\node[box] (10) at(-5,-4.5) {\large  XVCS with Perfect White Pixel};
		\node[box] (if) at(6,-4.5) {\large  $k$ Same Systems};
		\node[box] (other) at(6,-6) {\large  Other XVCSs};
		\draw[->] (if)--node[pos=0.5,above,sloped]{\scriptsize No}(other);
		\draw[->] (5)--(if);
		\draw[->] (if)--node[pos=0.5,above,sloped]{\scriptsize Yes}(8); 
		\draw[<->,line width =1.5pt,black] (4)-- node[pos=0.5,above,sloped]{\scriptsize Equivalent}  node[pos=0.5,below,sloped]{\scriptsize in Existance} (8);
		\draw[<->,line width =1.5pt,black] (4)-- node[pos=0.5,above,sloped]{\scriptsize Equivalent}  node[pos=0.5,below,sloped]{\scriptsize in Existance} (9);
		\draw[<->,line width =1.5pt,black] (8)-- node[pos=0.5,above,sloped]{\scriptsize Equivalent}  node[pos=0.5,below,sloped]{\scriptsize in Existance} (10);
        \draw[<->,line width =1.5pt,black] (9) -- node[pos=0.5,above,sloped]{\scriptsize Equivalent}  node[pos=0.5,below,sloped]{\scriptsize in Existance} (10);
    \end{tikzpicture}
\end{figure*}
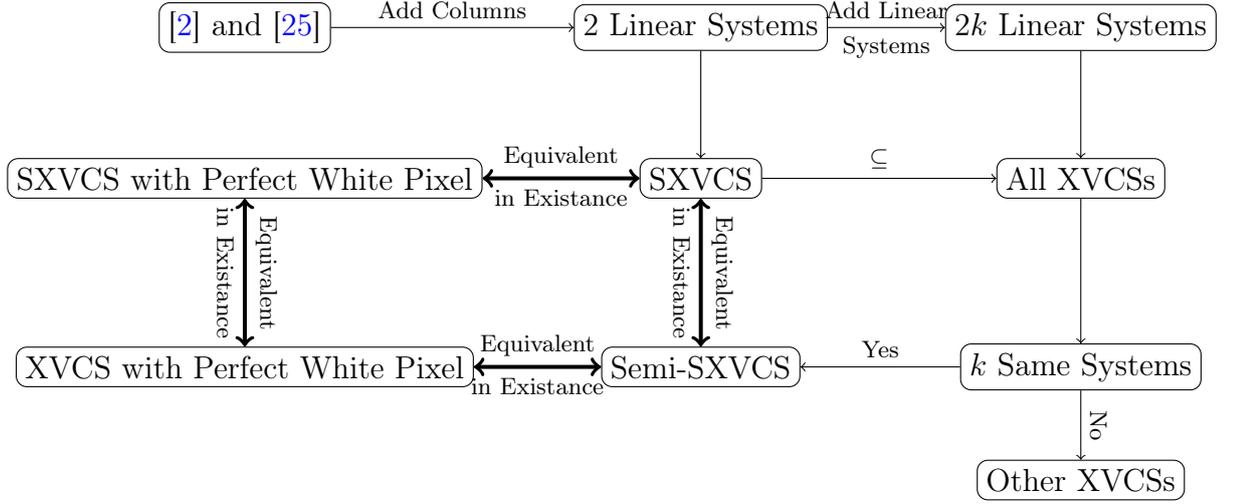



	\section{Preliminaries} \label{s2}
	\par{In this section, we will show the definition of XVCS, SXVCS and SemiSXVCS. Also, some simplified notations will be given. However, the previous work of \cite{Shen2017} and \cite{Adhikari2014} will be written in a brand new form in the next section since that is to some extent very different from their original work.}
	
	\par{In general, VCS relies on an access structure, which is a characterization of the rights of participants. In fact, the concept of access structure is not limited to VCS alone; it is a fundamental concept that applies to all \textit{Secret Sharing Schemes}. If we want to talk about general secret sharing, we need to define access structure first.}
	
	\subsection{General access structure}
	\begin{definition}
		\par{Let $P=\{1,2,...,n\}$ be the \textit{participant set}. Let any given nonempty set $\Gamma_{Qual}\subseteq2^P$ be the \textit{qualified set}. Then we define:
			\begin{itemize}
				\item Let $\Gamma^{-}=\{Q\in\Gamma_{Qual}|\nexists Q'\in \Gamma_{Qual}, s.t.~Q'\subsetneqq Q\}$ be the minimal qualified set, 
				\item Let $\Gamma_{Forb}=\{F\in2^{P}|\forall Q\in \Gamma^{-}, Q\nsubseteq F\}$ be the forbidden set,
				\item And the maximal forbidden set is
				$\Gamma^{+}=\{F\in\Gamma_{Forb}|\nexists F'\in \Gamma_{Forb}, s.t.~F\subsetneqq F'\}$,
				\item We call $\Gamma=(\Gamma_{Qual},\Gamma_{Forb})$ an access structure. Sometimes we change $\Gamma_{Qual}$ for $\Gamma^{-}$ or even just use $\Gamma^{-}$ to represent the whole access structure despite its ambiguity compared to $(\Gamma_{Qual},\Gamma_{Forb})$.
			\end{itemize}
			If $\Gamma^{-}=\{Q\in2^{P}||Q|=k\}$ and $|P|=n$, we specially call it a $(k,n)$ access structure.
		}
	\end{definition}
	
	\par{In this context, we do not focus on the monotonicity of XVCS. Consequently, it is important to bear in mind that a set $Q\in 2^{P}$ satisfying $\exists Q_{0}\in \Gamma^{-}\ s.t. Q_{0}\subsetneqq Q$ does not necessarily imply $Q\in \Gamma_{Qual}$. Furthermore, the set $Q$ is not a subset of $\Gamma_{Forb}$, as individuals can still reconstruct the Secret Image (SI) with shares in $Q_{0}\subsetneqq Q$ by disregarding shares in $Q\setminus Q_{0}$. In general, researchers tend to discuss access structures with monotonicity due to their favorable mathematical properties. However, it is worth noting that many secret sharing schemes, including VCS and particularly XVCS, do not always exhibit monotonic increases. Even when they do, they may fall short in other aspects, such as pixel expansion.}
	
	\par{After providing the definition of the access structure, now we define XVCS.}
	
	\subsection{XOR-based Visual Cryptography Scheme(XVCS)}
	
	\subsubsection{General XVCS}
	\begin{definition}\label{VCS}
		\par{Given an access structure $\Gamma=(\Gamma_{Qual},\Gamma_{Forb})$ with the same notations above. If two (finite) collections of $n\times m$ Boolean matrices $C_{0}$,$C_{1}$ satisfy the following two conditions:
			\begin{center}
				\begin{itemize}
					\item \textbf{Contrast Condition:} $\forall Q\in \Gamma_{Qual}$,$\forall M_{0}\in C_{0},M_{1}\in C_{1}s.t. \omega(\oplus(M_1[Q]))>\omega(\oplus(M_0[Q]))$;
					
					\item \textbf{Security Condition:} $\forall F=\{i_{1},i_{2},\cdots,i_{p}\}\in \Gamma_{Forb}$, the two collections of $p\times n$ matrices $D_{0}$ and $D_{1}$ obtained by restricting $C_{0}$ and $C_{1}$ to rows $i_{1},i_{2},\cdots,i_{p}$, respectively, denoted by $D_{0}=C_{0}[F]$ and $D_{1}=C_{1}[F]$, are distinguishable in the sense that they contain the same matrices with the same frequencies.
				\end{itemize}
			\end{center}
			then we say $C_{0}$ and $C_{1}$ construct an XVCS on $\Gamma$ where $\omega(\mathbf{v})$ means the number of "one"s in the boolean vector $\mathbf{v}$, $M[Q]$ means the matrix $M$ is restricted to rows $i_{1},i_{2},\cdots,i_{p}$($X=\{i_{1},i_{2},\cdots,i_{p}\}$) and $\oplus(M)$ is a boolean row vector which is the result of stacking every row of M through the operation XOR. We call the parameter $m$ the pixel expansion and define contrast of a set $Q\in \Gamma_{Qual}$ as $\displaystyle{\alpha(Q):=
				\dfrac{
					\dfrac{\displaystyle{\sum_{M_{1}\in C_{1}}\omega(\oplus M_{1}[Q])}}
					{|C_{1}|}
					-
					\dfrac{\displaystyle{\sum_{M_{1}\in C_{0}}\omega(\oplus M_{0}[Q])}}
					{|C_{0}|}
				}
				{m}
			}$. The average of all $\alpha(Q)$ is denoted by $\alpha$.
		}
	\end{definition}
	
	\par{In terms of security conditions, we consider $D_{0}$ and $D_{1}$ to be distinguishable if they contain the same matrices with the same frequencies. This notion is not as straightforward as "completely equal." Therefore, we will more frequently utilize Corollary \ref{better_security_condition} in the following sections.}
	
	\begin{corollary}[Better Security Condition]\label{better_security_condition}
		\par{If an XVCS already satisfies \ref{VCS}, there exists another XVCS that meets the simpler security condition $C_{0}[F]=C_{1}[F]$. Furthermore, the two XVCSs have the same impact on the reconstructed image.}
		
		\begin{proof}
			\par{Assume that we have already constructed an XVCS with two collections of basic matrices $C_{0}$ and $C_{1}$, with the number of elements being $|C_{0}|=p_{0}$ and $|C_{1}|=p_{1}$. Let $C_{0}^{*}:=p_{1}C_{0},C_{1}^{*}:=p_{0}C_{1}$, where $p_{0}C_{1}$ and $p_{1}C_{0}$ represent two new collections of matrices defined by $p_{0}$ "$C_{1}$"s and $p_{1}$ "$C_{0}$"s. It is easy to verify that $C_{0}^{*}$ and $C_{1}^{*}$ also construct an XVCS with $|C_{0}^{*}|=|C_{1}^{*}|$ and have the same effect on the reconstructed image.}
		\end{proof}
	\end{corollary}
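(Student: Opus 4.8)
The plan is to obtain the stronger scheme by replicating each original collection enough times to equalize their cardinalities, and then to argue that equal cardinality upgrades the original ``same frequencies'' condition to the exact multiset equality $C_0^*[F]=C_1^*[F]$. First I would set $p_0=|C_0|$ and $p_1=|C_1|$ and define $C_0^*$ to be the collection consisting of $p_1$ copies of $C_0$ and $C_1^*$ to be the collection consisting of $p_0$ copies of $C_1$, so that at once $|C_0^*|=|C_1^*|=p_0p_1$. This equalization is what makes the simpler security condition meaningful in the first place.

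Next I would verify the security condition in its strong form. Fix any $F\in\Gamma_{Forb}$ and write $f_0(M)$ and $f_1(M)$ for the number of occurrences of a matrix $M$ in the restricted multisets $C_0[F]$ and $C_1[F]$. The original scheme's security condition, read as equality of \emph{relative} frequencies, says $f_0(M)/p_0=f_1(M)/p_1$ for every $M$. In $C_0^*[F]$ the matrix $M$ occurs $p_1 f_0(M)$ times, while in $C_1^*[F]$ it occurs $p_0 f_1(M)$ times; cross-multiplying the proportionality yields $p_1 f_0(M)=p_0 f_1(M)$, so the two restricted multisets coincide and $C_0^*[F]=C_1^*[F]$ holds exactly.

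Finally I would confirm that contrast and reconstruction are untouched. The strict inequality $\omega(\oplus M_1[Q])>\omega(\oplus M_0[Q])$ of Definition \ref{VCS} concerns individual pairs, and every element of $C_0^*,C_1^*$ is a copy of an element of $C_0,C_1$, so it survives verbatim for all new pairs. Because each contrast $\alpha(Q)$ depends only on the averages $\frac{1}{|C_i|}\sum_{M\in C_i}\omega(\oplus M[Q])$, and replicating a collection a fixed number of times leaves each such average invariant, every $\alpha(Q)$, the mean $\alpha$, and hence the reconstructed image are all identical to those of the original scheme. I expect the only genuinely delicate point to be the interpretation of ``the same matrices with the same frequencies'' as equality of proportions rather than of absolute counts---necessary precisely because $|C_0|$ and $|C_1|$ need not agree---after which the cross-multiplication $p_1 f_0(M)=p_0 f_1(M)$ makes the remainder routine.
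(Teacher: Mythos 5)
Your proposal is correct and takes exactly the paper's route: the paper's proof uses the identical replication $C_{0}^{*}:=p_{1}C_{0}$, $C_{1}^{*}:=p_{0}C_{1}$ and dismisses the verification as ``easy to verify,'' which you have simply spelled out in full. Your reading of ``the same matrices with the same frequencies'' as equality of \emph{relative} frequencies---so that the cross-multiplication $p_{1}f_{0}(M)=p_{0}f_{1}(M)$ yields exact multiset equality $C_{0}^{*}[F]=C_{1}^{*}[F]$---is precisely the interpretation the paper's construction implicitly relies on, and your check that replication preserves each pairwise contrast inequality and each average $\alpha(Q)$ completes the argument as intended.
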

	
	\par{As mentioned earlier, this paper primarily focuses on SXVCS and SemiSXVCS. Therefore, we will now delve into the definitions and explanations of SXVCS and SemiSXVCS.}
	
	\subsubsection{Static XVCS(SXVCS) and Semistatic XVCS(SemiSXVCS)}
	
	\par{SXVCS is a mathematical description of XVCS without noise in the reconstructed image which is related to the qualified set instead of the forbidden set. So it is natural that we only have to change the contrast condition in \ref{VCS}.}
	\begin{definition}\label{SXVCS}
		\par{Given an access structure $\Gamma=(\Gamma_{Qual},\Gamma_{Forb})$ with the same notations above.If two (finite) collections of $n\times m$ Boolean matrices $C_{0}$,$C_{1}$ satisfy the following two conditions:
			\begin{center}
				\begin{itemize}
					\item \textbf{Static Contrast Condition:} 
					For all $Q \in \Gamma_{Qual}$, and for any $M_{0} \in C_{0}$ and $M_{1} \in C_{1}$, $\oplus(M_0[Q])=:E_{0}(Q)$ and $\oplus(M_1[Q])=:E_{1}(Q)$ are independent with $M_{0}$ and $M_{1}$ and $\omega(E_{0}(Q))<\omega(E_{1}(Q))$. Compared to other XVCS, here the reconstructed pixels $E_{0}(Q)$ and $E_{1}(Q)$ are solely dependent on $Q$ and not influenced by the specific selection of $M_{0}$ and $M_{1}$.
					
					\item \textbf{Security Condition:} $\forall F=\{i_{1},i_{2},\cdots,i_{p}\}\in \Gamma_{Forb}$, the two collections of $p\times n$ matrices $D_{0}$ and $D_{1}$ obtained by restricting $C_{0}$ and $C_{1}$ to rows $i_{1},i_{2},\cdots,i_{p}$, respectively, denoted by $D_{0}=C_{0}[F]$ and $D_{1}=C_{1}[F]$, are distinguishable in the sense that they contain the same matrices with the same frequencies.
				\end{itemize}
			\end{center}
			then we say $C_{0}$ and $C_{1}$ construct an SXVCS on $\Gamma$.
		}
	\end{definition}
	
	\par{Also, we have a seemingly weaker concept called SemiSXVCS.
		\begin{definition}\label{SemiSXVCS}
			\par{Given an access structure $\Gamma=(\Gamma_{Qual},\Gamma_{Forb})$ with the same notations above.If two (finite) collections of $n\times m$ Boolean matrices $C_{0}$,$C_{1}$ satisfy the following two conditions:
				\begin{center}
					\begin{itemize}
						\item \textbf{Semistatic Contrast Condition:} For all $Q \in \Gamma_{Qual}$, and for any $M_{0} \in C_{0}$, $\oplus(M_0[Q])=:E_{0}(Q)$ is independent with $M_{0}$ and $\omega(E_{0}(Q))<\omega(E_{1}(Q))$.
						
						\item \textbf{Security Condition:} $\forall F=\{i_{1},i_{2},\cdots,i_{p}\}\in \Gamma_{Forb}$, the two collections of $p\times n$ matrices $D_{0}$ and $D_{1}$ obtained by restricting $C_{0}$ and $C_{1}$ to rows $i_{1},i_{2},\cdots,i_{p}$, respectively, denoted by $D_{0}=C_{0}[F]$ and $D_{1}=C_{1}[F]$, are distinguishable in the sense that they contain the same matrices with the same frequencies.
					\end{itemize}
				\end{center}
				then we say $C_{0}$ and $C_{1}$ construct a White- SemiSXVCS on $\Gamma$. Similarly, we have Black- SemiSXVCS.
			}
		\end{definition}
	}
	\par{It is easy to check that XVCS with pixel expansion $1$ is SXVCS. Despite what it looks like, we will prove the fact that the existence of a SemiSXVCS implies the existence of an SXVCS.}
	
	\par{For SemiSXVCS, we can check that an XVCS with $\oplus M_{0}[Q]\equiv\mathbf{0}(\forall Q\in \Gamma_{Qual},\forall M_{0}\in C_{0}$) is a SemiSXVCS, which we call PW-XVCS. In \cite{2005Tuyls}, Tuyls constructed $(2,n)$-XVCS with the property above(thus is a SemiSXVCS) by using binary code. That means SemiSXVCS is not that rare.}
	
	\par{Here, we present the reconstructed images of an SXVCS and a SemiSXVCS as shown in \ref{fig:example1}. Given the access structure as (2,3) access structure. The reconstructed image of SXVCS clearly exhibits a noise-free quality, with all subpixels arranged in a neat manner, resulting in the vertical lines observed in the image.}
	
	\begin{figure}[htbp]
		\centering
		\subfloat[SemiSXVCS]{\includegraphics[width=2.5in]{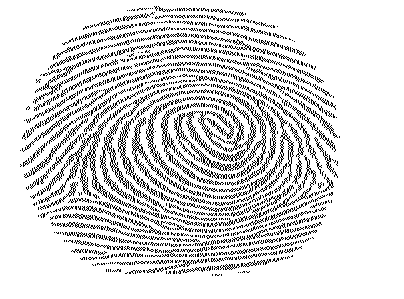}\label{fig:semi12}} \hspace{0.03in}
         \subfloat[SXVCS]{\includegraphics[width=2.5in]{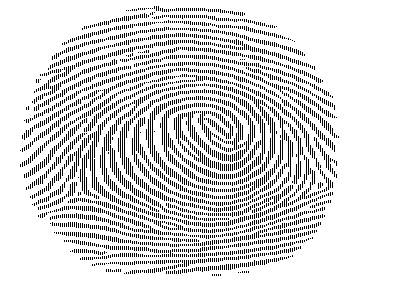}\label{fig:semi12}} \hspace{0.03in}
		\caption{Comparison of SemiSXVCS(PW-XVCS) and SXVCS reconstruction image examples}
		\label{fig:example1}
	\end{figure}

	\subsection{The Matrix Formulation of Access Structure}

    \par{The mathematical tools employed in this paper primarily rely on linear algebra. In order to effectively describe the theorems related to linear system, it is essential to represent the access structure in matrix form. This approach proves valuable as it eliminates the need for complex transformations between qualified sets, subsets, vectors, and matrices. By utilizing a matrix formulation for the access structure, we can avoid convoluted terminology and present a more coherent and structured representation.}
 
	\par{First we have to introduce the concept of \textit{bitset}.
		
		\begin{definition}
			\par{Given a boolean vector $\textbf{V}$, its bitset $V$ is defined by $n\in V \iff$ the $n$th component of $\textbf{V}$ is "$1$". It is clear that one boolean vector has and only has one bitset.}
		\end{definition}
		
		\par{With the definition of bitset, it is clear that every subset $Q$ of qualified set $\Gamma_{Qual}$ is corresponding with a unique boolean row vector. For convenience, we denote it by $\mathbf{Q}$.}
		
		\begin{definition}\label{MinQuaMatr}
			\par{Given a qualified set $\Gamma_{Qual}=\{Q_{1},Q_{2},\cdots,Q_{t}\}$, we call the $t\times n$ matrix
				\begin{center}
					$$\left[\begin{array}{c}
						\mathbf{Q_{1}}\\
						\mathbf{Q_{2}}\\
						\vdots\\
						\mathbf{Q_{t}}\\
					\end{array}\right]$$
				\end{center}
				
				the \textit{Qualified Matrix} of $\Gamma_{Qual}$. For convenience, we denote it by $\mathbf{\Gamma_{Qual}}$}
		\end{definition}
		
		\par{For example, if $\Gamma_{Qual}=\{\{1,2\},\{1,3\},\{1,4\}\}$, then we have
			$$\mathbf{\Gamma_{Qual}}=\left[\begin{array}{cccc}
				1&	1&	0&	0\\
				1&	0&	1&	0\\
				1&	0&	0&	1
			\end{array}
			\right].$$}
		\par{No wonder a qualified set usually has more than one qualified matrices. Since these matrices are the same up to row permutation and they all correspond with only one unique qualified set, we usually choose one suitable matrix before we begin a proof or something else.}
		
		\par{Similarly, We can also define \textit{Minimal Qualified matrix, Forbidden Matrix} and \textit{Maximal Forbidden Matrix}. }

        \section{XVCS Construction via $2$ linear equations systems}\label{s3}
		\par{Before our research, \cite{Adhikari2014} and \cite{Shen2017} have already given some useful and important theorems. Firstly, Adhikari introduced his theorem of constructing some OVCSs through $2$ linear systems on a binary field. Then, in 2017 Gang Shen et al. change Adhikari's theorem for XVCS and gave a useful equivalent condition of whether an access structure has an XVCS with pixel expansion $1$. That is, the $2$ solution sets(we will denote it by $S_{0}$ and $S_{1}$ to distinguish from other XVCS constructed from other methods of which basis matrices are denoted by $C_{0}$ and $C_{1}$) of the following $2$ linear systems on binary field constructs an XVCS with pixel expansion $1$ if and only if an odd number of rows $\mathbf{\Gamma_{Qual}}$ stacked by XOR is still a row of $\mathbf{\Gamma_{Qual}}$ and even number of rows of $\mathbf{\Gamma_{Qual}}$ stacked by XOR(denoted by $\mathbf{V}$) is zero or $V$ cannot be a subset of any element of the minimal qualified set. 
			\begin{center}
				$\mathbf{\Gamma_{Qual}}X=\mathbf{0_{t,1}}$;\\
				$\mathbf{\Gamma_{Qual}}X=\mathbf{1_{t,1}}$
			\end{center}
			Here in the above two linear systems on binary field, $\mathbf{1_{t,1}}$ represents a column vector sized $t$ with each component equals to "$1$" and $t:=|\Gamma_{Qual}|$.
		}
		\par{Now, It is clear why this method is disabled for constructing XVCS with minimal pixel expansion larger than one: the solutions of these linear systems are in fact matrices of size $n\times 1$ instead of the more generally $n\times m$. 
        To solve this problem, we can expand $X$, $\mathbf{0_{t,1}}$, and $\mathbf{1_{t,1}}$ to matrices with $m$ columns, as shown below. Despite the apparent formula difference, the result remains as two linear systems.
			\begin{center}
				\begin{eqnarray}
					\mathbf{\Gamma_{Qual}}X=B_{0};\label{white_SXVCS}\\
					\mathbf{\Gamma_{Qual}}X=B_{1}\label{black_SXVCS},
				\end{eqnarray}
			\end{center}
            Here $B_{0},B_{1}$ are all boolean matrices of size $t\times m$ and $X$ is of size $n\times m$.
		}

        \par{One thing that needs to be clarified is that the parameter $m$ represents the pixel expansion of the desired outcome. It should be noted that simply writing down the two systems and solving them does not guarantee the immediate formation of an XVCS. In the following discussion, we will examine the conditions under which the two linear systems can construct an XVCS. Specifically, we will address the contrast condition and the security condition.}

		\subsection{Contrast condition}
		\par{Suppose that we have already constructed an XVCS no matter how did we get it. Put $M_{0}\in C_{0}$ into the place of $X$ in \eqref{white_SXVCS} and you will find that the left side of the equation simply means the stack of shares in $\Gamma_{Qual}$ and the right side of the equation is simply the result of these stack of shares in $\Gamma_{Qual}$ which is realized by block matrix multiplication. For example, if a $(2,3)$-XVCS has a matrix $[\mathbf{0_{3,1}}\ \mathbf{1_{3,1}}]$ in $C_{0}$, then it should look like this:
			\begin{center}
				$\left[\begin{array}{ccc}
					1&	1&	0\\
					1&	0&	1\\
					0&	1&	1
				\end{array}
				\right]\left[\begin{array}{cc}
					0&	1\\
					0&	1\\
					0&	1
				\end{array}
				\right]=\left[\begin{array}{cc}
					0&	0\\
					0&	0\\
					0&	0
				\end{array}
				\right].$
			\end{center}
		}
		
		\par{After realizing this, the contrast condition is reduced to check $B_{0}$ and $B_{1}$ in \eqref{white_SXVCS} and \eqref{black_SXVCS}. Also, notice that if an XVCS is constructed from $S_{0}$ and $S_{1}$ which are the solution sets of \eqref{white_SXVCS} and \eqref{black_SXVCS}, it satisfies the static contrast condition in \ref{SXVCS} so it is an SXVCS. So we conclude the result as the following lemma.
  
			\begin{lemma}\label{contrast_SXVCS}
				Suppose that \eqref{white_SXVCS} and \eqref{black_SXVCS} are consistant. If $\forall k\in\{1,2,\cdots,t\}, \omega(B_{0}[\{k\}])<\omega(B_{1}[\{k\}])$, then $S_{0}$ and $S_{1}$ satisfies the static contrast condition in \ref{SXVCS}.
			\end{lemma}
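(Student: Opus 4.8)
The plan is to read the two linear systems through the block-matrix interpretation that was just established on the $(2,3)$ example: applying $\mathbf{\Gamma_{Qual}}$ to a solution reproduces, row by row, the XOR-stacking of the shares over each qualified set. The whole argument then reduces to identifying the right-hand sides $B_0$ and $B_1$ with the reconstructed pixels, and observing that they depend only on the qualified set, not on which solution is selected.

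First I would fix an arbitrary $Q \in \Gamma_{Qual}$ and let $k$ be the index of the row $\mathbf{Q}$ inside $\mathbf{\Gamma_{Qual}}$. For any $M_0 \in S_0$, I would note that the $k$-th row of the product $\mathbf{\Gamma_{Qual}} M_0$ is $\mathbf{Q}\,M_0$, and that over the binary field (where matrix addition is the XOR operation) this is exactly the XOR of those rows of $M_0$ whose index lies in the bitset $Q$, i.e. $\oplus(M_0[Q])$. This is the single step where care is needed, and it is the crux of the lemma: one must invoke that the ``$+$'' in the matrix product is XOR, so that $\mathbf{Q}\,M_0 = \oplus(M_0[Q])$ holds by the very definition of the bitset. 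Since $M_0$ solves \eqref{white_SXVCS}, the $k$-th row of $\mathbf{\Gamma_{Qual}} M_0$ equals $B_0[\{k\}]$, giving $\oplus(M_0[Q]) = B_0[\{k\}]$; because $B_0[\{k\}]$ is determined by $Q$ alone, this shows $E_0(Q) := \oplus(M_0[Q])$ is independent of the choice of $M_0 \in S_0$. Repeating the identical reasoning for \eqref{black_SXVCS} yields $E_1(Q) := \oplus(M_1[Q]) = B_1[\{k\}]$, independent of $M_1 \in S_1$, which settles the ``independence'' half of the static contrast condition for every $Q$.

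Finally I would invoke the hypothesis directly: since $\omega(B_0[\{k\}]) < \omega(B_1[\{k\}])$ for every $k \in \{1,\dots,t\}$, the identifications above give $\omega(E_0(Q)) < \omega(E_1(Q))$ for every $Q \in \Gamma_{Qual}$. Combined with the independence just established, this is precisely the static contrast condition of Definition \ref{SXVCS}, completing the argument. I do not expect a genuine obstacle: the only nontrivial point is recognizing the XOR-stacking as a $\mathrm{GF}(2)$ matrix product, which the preceding discussion has already made explicit, so the proof is essentially a matter of transcribing that identification and applying the weight hypothesis.
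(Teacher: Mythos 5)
Your proof is correct and follows essentially the same route as the paper: the paper likewise justifies this lemma by observing, via block matrix multiplication over $\mathrm{GF}(2)$, that the row of $\mathbf{\Gamma_{Qual}}X$ indexed by $Q$ equals $\oplus(X[Q])$, so the reconstructed pixels are pinned to the rows of $B_0$ and $B_1$ independently of the chosen solution, and the weight hypothesis then yields the static contrast condition. Your write-up is in fact slightly more explicit than the paper's, which presents this identification informally through a $(2,3)$ example rather than a formal argument.
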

		}
		\subsection{Security Condition}
		\par{It is easy to check that \ref{contrast_SXVCS} is not enough for the security condition. For example, check this one: given the qualified matrix as $\mathbf{\Gamma_{Qual}}=\left[\begin{array}{cccc}
				1&1&1&0\\
				1&1&0&1\\
				0&1&1&1
			\end{array}\right]$ and expect an XVCS of pixel expansion $1$. We can write down the $2$ systems like this:
			\begin{center}
				$\left[\begin{array}{cccc}
					1&1&1&0\\
					1&1&0&1\\
					0&1&1&1
				\end{array}\right]X=\mathbf{0_{3,1}}$\\
				$\left[\begin{array}{cccc}
					1&1&1&0\\
					1&1&0&1\\
					0&1&1&1
				\end{array}\right]X=\mathbf{1_{3,1}}$
			\end{center}
			and we can get $S_{0}$ and $S_{1}$:
			\begin{center}
				$S_{0}=
				\left\{\left[
				\begin{array}{c}
					0\\
					0\\
					0\\
					0\\
				\end{array}
				\right],\left[
				\begin{array}{c}
					1\\
					0\\
					1\\
					1\\
				\end{array}
				\right]\right\},
				S_{1}=
				\left\{\left[
				\begin{array}{cc}
					0\\
					1\\
					0\\
					0\\
				\end{array}
				\right],\left[
				\begin{array}{c}
					1\\
					1\\
					1\\
					1\\
				\end{array}
				\right]\right\}.$
			\end{center}
           Here it becomes evident that the second participant violates the security condition. Therefore, it is crucial to determine the equivalent condition under which $S_{0}$ and $S_{1}$ can construct an XVCS (or equivalently an SXVCS). Fortunately, the solution to this problem has already been proposed by \cite{Adhikari2014} and \cite{Shen2017}. Our task now is to expand upon their ideas.
            }
		
		\par{Now we denote the solution set of the following linear systems as $S_{w}$.
			\begin{equation}\label{perfect_white_SXVCS}
				\mathbf{\Gamma_{Qual}}X=\textbf{0}_{t,m}.
			\end{equation}	
			
		}
		\par{Similar to \cite{Adhikari2014} and \cite{Shen2017}, we proof the following lemmas.}
		
		\begin{lemma}\label{lem01}
			\par{Suppose that $\mathbf{\Gamma_{Qual}}X=B_{i}$ has a particular solution $X_{i}(i=0,1)$.Then the solution set is $S_{i}=\{X|\exists X'\in S_{w}, X=X'+X_{i}\}=:S_{w}+X_{i}(i=1,2)$.}
		\end{lemma}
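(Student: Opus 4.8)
The statement is the standard characterization of the solution set of a linear system $\mathbf{\Gamma_{Qual}}X=B_{i}$ over the binary field: it is the coset $S_{w}+X_{i}$, where $S_{w}$ is the solution set of the associated homogeneous system \eqref{perfect_white_SXVCS} and $X_{i}$ is any fixed particular solution. The plan is to prove the two set inclusions $S_{i}\subseteq S_{w}+X_{i}$ and $S_{w}+X_{i}\subseteq S_{i}$ by direct manipulation, exploiting that matrix multiplication by $\mathbf{\Gamma_{Qual}}$ is a linear (indeed additive) map over $\mathbb{F}_{2}$.

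First I would establish the backward inclusion $S_{w}+X_{i}\subseteq S_{i}$, which is the easier direction. Take any $X=X'+X_{i}$ with $X'\in S_{w}$. Then by the distributivity of matrix multiplication over addition, $\mathbf{\Gamma_{Qual}}X=\mathbf{\Gamma_{Qual}}X'+\mathbf{\Gamma_{Qual}}X_{i}=\mathbf{0}_{t,m}+B_{i}=B_{i}$, so $X$ is indeed a solution, i.e.\ $X\in S_{i}$. Here I use that $X'$ solves the homogeneous system by definition of $S_{w}$ and that $X_{i}$ is the assumed particular solution.

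For the forward inclusion $S_{i}\subseteq S_{w}+X_{i}$, I would take an arbitrary $X\in S_{i}$ and set $X':=X-X_{i}$ (which over $\mathbb{F}_{2}$ is the same as $X+X_{i}$, since $-1=1$). Applying $\mathbf{\Gamma_{Qual}}$ gives $\mathbf{\Gamma_{Qual}}X'=\mathbf{\Gamma_{Qual}}X-\mathbf{\Gamma_{Qual}}X_{i}=B_{i}-B_{i}=\mathbf{0}_{t,m}$, so $X'\in S_{w}$. Since $X=X'+X_{i}$ with $X'\in S_{w}$, we conclude $X\in S_{w}+X_{i}$, completing the proof.

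There is no serious obstacle here; this is the classical coset structure of solutions to an inhomogeneous linear system, and the only point requiring a word of care is that all arithmetic is carried out over the binary field, so that subtraction coincides with addition and the cancellation steps are legitimate. The hypothesis that the systems are consistent is exactly what guarantees the existence of the particular solution $X_{i}$, so the statement is vacuously unproblematic when a particular solution is assumed to exist.
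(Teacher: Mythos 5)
Your proof is correct: it is exactly the standard two-inclusion coset argument that the paper itself invokes, since the paper omits the proof of this lemma with the remark that it appears in essentially every linear algebra textbook. Your added care about arithmetic over $\mathbb{F}_{2}$ (subtraction coinciding with addition) is sound and fully consistent with the paper's setting.
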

		\par{since the proof of this lemma is written on almost every "Linear Algebra" teaching material, it is omitted.}
		
		\begin{lemma}\label{security_theorem}
			\par{$S_{0}[F]=S_{1}[F]\Longleftrightarrow \exists X_{0}\in S_{0},X_{1}\in S_{1}\ s.t.\ X_{0}[F]=X_{1}[F]$. And if $\forall F\in \Gamma_{Forb},S_{0}[F]=S_{1}[F]$, $S_{0}$ and $S_{1}$ satisfies the security condition in \ref{SXVCS}.}
		\end{lemma}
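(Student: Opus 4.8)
\par{The plan is to separate the biconditional from the concluding implication, and to settle the biconditional by exploiting the affine (coset) structure of the solution sets provided by Lemma \ref{lem01}. The forward direction is essentially definitional: if $S_{0}[F]=S_{1}[F]$ as collections, then choosing any $X_{0}\in S_{0}$ gives $X_{0}[F]\in S_{0}[F]=S_{1}[F]$, so some $X_{1}\in S_{1}$ satisfies $X_{1}[F]=X_{0}[F]$, which is exactly the required pair.}

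\par{For the backward direction I would begin with the hypothesised pair $X_{0}\in S_{0}$, $X_{1}\in S_{1}$ with $X_{0}[F]=X_{1}[F]$. Since $X_{0}$ and $X_{1}$ are then particular solutions of \eqref{white_SXVCS} and \eqref{black_SXVCS}, Lemma \ref{lem01} yields $S_{0}=S_{w}+X_{0}$ and $S_{1}=S_{w}+X_{1}$. The decisive point is that the restriction operator $[F]$ only selects the rows indexed by $F$ and is therefore additive, so $(X'+X_{i})[F]=X'[F]+X_{i}[F]$. Consequently $S_{0}[F]=S_{w}[F]+X_{0}[F]$ and $S_{1}[F]=S_{w}[F]+X_{1}[F]$, and the equality $X_{0}[F]=X_{1}[F]$ forces these two translated collections to coincide.}

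\par{The step I expect to be the main obstacle is tracking the multiplicities, because the security condition asks for ``the same matrices with the same frequencies'' rather than plain set equality. To deal with this I would observe that $S_{w}$ is the kernel of the linear map $X\mapsto \mathbf{\Gamma_{Qual}}X$ over the binary field, hence a vector space, and that the restriction $\pi_{F}\colon X'\mapsto X'[F]$ is linear on $S_{w}$. Its fibres are cosets of $\ker(\pi_{F}|_{S_{w}})$, so every matrix occurring in $S_{w}[F]$ is attained with the same multiplicity $|\ker(\pi_{F}|_{S_{w}})|$. Translation by the fixed matrix $X_{0}[F]=X_{1}[F]$ is a bijection and hence preserves these multiplicities, so $S_{0}[F]$ and $S_{1}[F]$ agree as multisets, not merely as sets.}

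\par{Finally, the concluding implication follows directly from Definition \ref{SXVCS}: taking $C_{0}=S_{0}$ and $C_{1}=S_{1}$, the collections $D_{0}=S_{0}[F]$ and $D_{1}=S_{1}[F]$ coinciding for every $F\in\Gamma_{Forb}$ is precisely the statement that they contain the same matrices with the same frequencies, which is the security condition.}
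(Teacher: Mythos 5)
Your proposal is correct and follows essentially the same route as the paper's own proof: both directions rest on the coset decomposition $S_{i}=S_{w}+X_{i}$ from Lemma \ref{lem01} together with the additivity of the row-restriction $[F]$, yielding $S_{0}[F]=S_{w}[F]+X_{0}[F]=S_{w}[F]+X_{1}[F]=S_{1}[F]$. Your explicit bookkeeping of multiplicities via the fibres of $\pi_{F}|_{S_{w}}$ is a careful refinement that the paper leaves implicit (there the translation bijection $X'\mapsto X'+X_{i}$ already transports the multiset structure), but it does not change the argument.
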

		
		\begin{proof}		
            "$\Rightarrow$" It is obvious since $S_{0}[F]=S_{1}[F]$.

            "$\Leftarrow$" By lemma \ref{lem01}, $S_{0}[F]=\{X|\exists X'\in S_{w}, X=X'+X_{0}\}[F]=\{X|\exists X'[F]\in S_{w}[F], X=X'[F]+X_{0}[F]\}=\{X|\exists X'[F]\in S_{w}[F], X=X'[F]+X_{1}[F]\}=\{X|\exists X'\in S_{w}, X=X'+X_{1}\}[F]=S_{1}[F]$. In short, $S_{0}[F]=S_{w}[F]+X_{0}[F]=S_{w}[F]+X_{1}[F]=S_{1}[F]$.
		\end{proof}
		
		\par{Please pay attention! We will next give an important corollary \ref{insertion_SXVCS}, which will show more of its importance in the next section.}
		
		\begin{corollary}[SXVCS Insertion Theorem]\label{insertion_SXVCS}
			\par{Suppose that we have already constructed an SXVCS with basis matrices $C_{0}$ and $C_{1}$. Then there exists an SXVCS constructed from \eqref{white_SXVCS} and \eqref{black_SXVCS} such that they have the same effect on the reconstructed picture.}
		\end{corollary}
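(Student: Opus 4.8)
The plan is to read off the two right-hand sides $B_0$ and $B_1$ directly from the given SXVCS and then verify that the solution sets $S_0,S_1$ of \eqref{white_SXVCS} and \eqref{black_SXVCS} reproduce it. Write $\Gamma_{Qual}=\{Q_1,\ldots,Q_t\}$ with qualified matrix $\mathbf{\Gamma_{Qual}}$. Because $C_0,C_1$ already form an SXVCS, the static contrast condition guarantees that $E_0(Q_k):=\oplus(M_0[Q_k])$ and $E_1(Q_k):=\oplus(M_1[Q_k])$ do not depend on the choice of $M_0\in C_0$ or $M_1\in C_1$. I would therefore define $B_0$ (resp. $B_1$) to be the $t\times m$ Boolean matrix whose $k$-th row is $E_0(Q_k)$ (resp. $E_1(Q_k)$). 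Via the block-matrix-multiplication reading of \eqref{white_SXVCS} explained above, the $k$-th row of $\mathbf{\Gamma_{Qual}}M_0$ is exactly $\oplus(M_0[Q_k])=E_0(Q_k)=B_0[\{k\}]$, so every $M_0\in C_0$ solves \eqref{white_SXVCS}; hence $C_0\subseteq S_0$ and, symmetrically, $C_1\subseteq S_1$. In particular both systems are consistent.

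With these $B_0,B_1$ in hand, the two SXVCS conditions follow from the lemmas already proved. For the static contrast condition, observe that $\omega(B_0[\{k\}])=\omega(E_0(Q_k))<\omega(E_1(Q_k))=\omega(B_1[\{k\}])$ for every $k$, where the strict inequality is inherited verbatim from the static contrast condition of the original scheme; Lemma \ref{contrast_SXVCS} then yields that $S_0,S_1$ meet the static contrast condition. For the security condition, fix any $F\in\Gamma_{Forb}$. The security condition of the given SXVCS says $C_0[F]$ and $C_1[F]$ are equal as multisets, so there exist $M_0\in C_0$ and $M_1\in C_1$ with $M_0[F]=M_1[F]$. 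Since $M_0\in S_0$ and $M_1\in S_1$, Lemma \ref{security_theorem} upgrades this single coincidence to the full equality $S_0[F]=S_1[F]$, and as $F$ was arbitrary the security condition holds. Thus $S_0$ and $S_1$ construct an SXVCS.

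It remains to check ``the same effect on the reconstructed picture.'' Since the new scheme is static, every solution in $S_0$ reconstructs the white pixel as $B_0[\{k\}]=E_0(Q_k)$ and every solution in $S_1$ reconstructs the black pixel as $B_1[\{k\}]=E_1(Q_k)$ for each $Q_k\in\Gamma_{Qual}$ --- identical to the original scheme --- while the pixel expansion stays $m$. Hence the reconstructed images coincide exactly, completing the argument. I expect the only delicate point to be the security step: one must not argue merely that a single pair of restrictions agrees, but must invoke Lemma \ref{security_theorem} to promote that coincidence to equality of the \emph{entire} restricted solution sets $S_0[F]=S_1[F]$, since $S_0,S_1$ are in general strictly larger than $C_0,C_1$. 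Everything else is the routine bookkeeping of identifying $B_i$ with the stacked reconstruction vectors $E_i(Q_k)$.
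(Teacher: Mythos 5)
Your proposal is correct and follows essentially the same route as the paper's own proof: you define $B_{0}=\mathbf{\Gamma_{Qual}}M_{0}$ and $B_{1}=\mathbf{\Gamma_{Qual}}M_{1}$ (well-defined by the static contrast condition), observe $C_{0}\subseteq S_{0}$ and $C_{1}\subseteq S_{1}$, and use the multiset equality $C_{0}[F]=C_{1}[F]$ to produce a single coinciding pair that Lemma \ref{security_theorem} promotes to $S_{0}[F]=S_{1}[F]$, exactly as the paper does. Your write-up is in fact slightly more explicit than the paper's on the contrast step (citing Lemma \ref{contrast_SXVCS}) and on verifying the identical reconstruction, but these are elaborations of the same argument rather than a different approach.
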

		\begin{proof}
			\par{Choose $M_{0}\in C_{0}$ and $M_{1}\in C_{1}$ randomly. Define $B_{0}:=\mathbf{\Gamma_{Qual}}M_{0}$ and $B_{1}:=\mathbf{\Gamma_{Qual}}M_{1}$. It is clear that no matter which $M_{0}$ and $M_{1}$ you choose,$B_{0}$ and $B_{1}$ are always the same $B_{0}$ and $B_{1}$ since it is an SXVCS. Now consider the following $2$ linear systems:
				$$\mathbf{\Gamma_{Qual}}X=B_{0}$$		
				$$\mathbf{\Gamma_{Qual}}X=B_{1}$$
				Denote the solution sets of them by $S_{0}$ and $S_{1}$. Then we have $C_{0}\subseteq S_{0}$ and $C_{1}\subseteq S_{1}$(if $C_{0}$ and $C_{1}$ are reduced from multisets to sets by deleting the repetitive elements). So by the security condition of the given SXVCS, for all $F\in \Gamma_{Forb}$, we can find $M_{0}^{F}\in C_{0}\subseteq S_{0}$ and $M_{1}^{F}\in C_{1}\subseteq S_{1}$ such that $M_{0}^{F}[F]=M_{1}^{F}[F]$. By lemma \ref{security_theorem}, $S_{0}[F]=S_{1}[F]$.
			}
			\par{For the contrast condition, it is clear that $B_{0}$ and $B_{1}$ work the same with the given SXVCS. So $S_{0}$ and $S_{1}$ construct an SXVCS with the same effect on the reconstructed picture.}
		\end{proof}
		
		\par{This corollary reveals that all access structures with SXVCS can construct SXVCS from $2$ linear systems and the linear algebra is much simpler. }
		
		\par{Although lemma \ref{security_theorem} has already given the equivalent condition of $S_{0}$ and $S_{1}$ satisfying the security condition, it is far from being useful for cumulating. The next is the matrix formula of lemma \ref{security_theorem}. Its proof is omitted since it directly comes from the block matrix multiplication. }
		
		\begin{corollary}[SXVCS General Security Theorem]\label{general_security_theorem_mat_SXVCS}
			\par{$S_{0}[F]=S_{1}[F]\Longleftrightarrow $ the following linear systems is consistent.
				\begin{equation}\label{general_security_blkmat_formula}
					\left[\begin{array}{cc}
						\mathbf{\Gamma_{Qual}}&	 \mathbf{0_{t,n}} \\
						\mathbf{0_{t,n}}&		 \mathbf{\Gamma_{Qual}} \\
						T_{F}&					 T_{F}
					\end{array}\right]X=
					\left[\begin{array}{c}
						B_{0} \\
						B_{1} \\	
						\mathbf{0_{|F|,m}}
					\end{array}\right],
			\end{equation}}
			$T_{F}$ is defined similarly to \cite{Adhikari2014}. The $i$th column of $T_{F}$ is $\mathbf{0_{|F|,1}} \Leftrightarrow$ the $i$th component of $\mathbf{F}$ is "$0$". The $i$th column of $T_{F}$ is $j$th column of the identity matrix $I_{|F|}\Leftrightarrow$ the $i$th component of $\mathbf{F}$ is the $j$th component of value "$1$".
		\end{corollary}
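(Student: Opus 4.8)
The plan is to read the block system \eqref{general_security_blkmat_formula} one row-block at a time, after partitioning the unknown to match the two column-blocks of the coefficient matrix. Since that matrix has $2n$ columns, write $X=\left[\begin{array}{c}X_{1}\\X_{2}\end{array}\right]$ with $X_{1},X_{2}$ each an $n\times m$ Boolean matrix. Carrying out the block multiplication, \eqref{general_security_blkmat_formula} is equivalent to the three simultaneous equations $\mathbf{\Gamma_{Qual}}X_{1}=B_{0}$, $\mathbf{\Gamma_{Qual}}X_{2}=B_{1}$, and $T_{F}X_{1}+T_{F}X_{2}=\mathbf{0_{|F|,m}}$.

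First I would recognize the first two equations for what they are: by the definitions of $S_{0}$ and $S_{1}$ as the solution sets of \eqref{white_SXVCS} and \eqref{black_SXVCS}, they assert exactly that $X_{1}\in S_{0}$ and $X_{2}\in S_{1}$.

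Next I would decode the third equation. The essential fact is that $T_{F}$ is built precisely so that $T_{F}M=M[F]$ for every $n\times m$ matrix $M$: its column description places a zero column at each participant absent from $F$ and a standard basis column at each participant present, so that left-multiplication extracts the rows indexed by $F$. Because we work over the binary field, where addition and subtraction coincide, $T_{F}X_{1}+T_{F}X_{2}=\mathbf{0}$ is equivalent to $T_{F}X_{1}=T_{F}X_{2}$, that is, $X_{1}[F]=X_{2}[F]$. Hence \eqref{general_security_blkmat_formula} is consistent if and only if there exist $X_{1}\in S_{0}$ and $X_{2}\in S_{1}$ with $X_{1}[F]=X_{2}[F]$, which by Lemma \ref{security_theorem} is equivalent to $S_{0}[F]=S_{1}[F]$.

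There is no serious obstacle here; the one step that genuinely requires care is verifying $T_{F}M=M[F]$ directly from the column description of $T_{F}$, together with the characteristic-two remark that turns the homogeneous coupling equation into the equality $X_{1}[F]=X_{2}[F]$. Once these are checked, the rest is bookkeeping on the three blocks and a single appeal to Lemma \ref{security_theorem}, which is exactly why the result can be presented as an immediate consequence of block matrix multiplication.
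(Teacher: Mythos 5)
Your proposal is correct and is exactly the argument the paper intends: the paper omits the proof, remarking that the corollary ``directly comes from the block matrix multiplication'' together with Lemma \ref{security_theorem}, and your write-up (partitioning $X$ into $X_{1},X_{2}$, identifying the first two block equations with membership in $S_{0}$ and $S_{1}$, using $T_{F}M=M[F]$ and characteristic two to turn the coupling equation into $X_{1}[F]=X_{2}[F]$, then invoking Lemma \ref{security_theorem}) is precisely that omitted verification.
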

		
		\par{When $B_{0}=\mathbf{0}_{t,m}$, the result is reduced to the following corollary.}
		
		\begin{corollary}[Perfect White Pixel SXVCS General Security Theorem]\label{security_perfect_white_pixel_SXVCS}
			\par{	When $B_{0}=\mathbf{0}_{t,m},\ S_{0}[F]=S_{1}[F]\Longleftrightarrow $ the following linear systems is consistent.
				\begin{equation}\label{perfect_white_security_blkmat_formula}
					\left[\begin{array}{c}
						\mathbf{\Gamma_{Qual}}\\
						T_{F}\\
					\end{array}\right]X=
					\left[\begin{array}{c}
						B_{1}\\
						\mathbf{0}_{|F|,m} \\
					\end{array}\right]
			\end{equation}}
		\end{corollary}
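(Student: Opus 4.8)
The plan is to specialize the general security theorem (Corollary \ref{general_security_theorem_mat_SXVCS}) to the case $B_0=\mathbf{0}_{t,m}$ and show that the consistency of the large block system \eqref{general_security_blkmat_formula} collapses to the consistency of the smaller block system \eqref{perfect_white_security_blkmat_formula}. Since Corollary \ref{general_security_theorem_mat_SXVCS} already gives $S_0[F]=S_1[F]\Longleftrightarrow$ the system \eqref{general_security_blkmat_formula} is consistent, it suffices to prove that, when $B_0=\mathbf{0}_{t,m}$, the system \eqref{general_security_blkmat_formula} is consistent if and only if \eqref{perfect_white_security_blkmat_formula} is consistent. The natural route is to exhibit an explicit correspondence between solutions of the two systems.

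First I would write the unknown of \eqref{general_security_blkmat_formula} as a stacked block $X=\left[\begin{smallmatrix}X^{(1)}\\ X^{(2)}\end{smallmatrix}\right]$, where $X^{(1)},X^{(2)}$ are each $n\times m$, matching the two copies of $\mathbf{\Gamma_{Qual}}$ in the block coefficient matrix. With $B_0=\mathbf{0}_{t,m}$, the three block-row equations become $\mathbf{\Gamma_{Qual}}X^{(1)}=\mathbf{0}_{t,m}$, $\mathbf{\Gamma_{Qual}}X^{(2)}=B_1$, and $T_F X^{(1)}+T_F X^{(2)}=\mathbf{0}_{|F|,m}$, i.e. $T_F(X^{(1)}+X^{(2)})=\mathbf{0}$. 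The first equation says $X^{(1)}\in S_w$ (the solution set of \eqref{perfect_white_SXVCS}); since $\mathbf{0}_{n,m}\in S_w$ always, the homogeneous block is never an obstruction on its own. The key observation is the substitution $Y:=X^{(1)}+X^{(2)}$: over the binary field, given any solution of \eqref{perfect_white_security_blkmat_formula} in the variable $Y$ (which requires $\mathbf{\Gamma_{Qual}}Y=B_1$ and $T_F Y=\mathbf{0}$), one recovers a solution of \eqref{general_security_blkmat_formula} by taking $X^{(1)}=\mathbf{0}_{n,m}$ and $X^{(2)}=Y$, because then $\mathbf{\Gamma_{Qual}}X^{(1)}=\mathbf{0}$, $\mathbf{\Gamma_{Qual}}X^{(2)}=B_1$, and $T_F(X^{(1)}+X^{(2)})=T_F Y=\mathbf{0}$. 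Conversely, from any solution of \eqref{general_security_blkmat_formula} the block $Y=X^{(1)}+X^{(2)}$ satisfies $\mathbf{\Gamma_{Qual}}Y=\mathbf{\Gamma_{Qual}}X^{(1)}+\mathbf{\Gamma_{Qual}}X^{(2)}=\mathbf{0}+B_1=B_1$ and $T_F Y=\mathbf{0}$, so $Y$ solves \eqref{perfect_white_security_blkmat_formula}. This two-way construction establishes the equivalence of consistency, which is exactly what Corollary \ref{general_security_theorem_mat_SXVCS} needs to translate into $S_0[F]=S_1[F]$.

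The only place requiring care — and the step I expect to be the main (though modest) obstacle — is verifying that the coefficient matrix $T_F$ interacts correctly with the change of variables, namely that the forbidden-set block of \eqref{general_security_blkmat_formula} is genuinely $T_F X^{(1)}+T_F X^{(2)}=T_F(X^{(1)}+X^{(2)})$ by linearity of matrix multiplication, and that the row-restriction semantics encoded by $T_F$ (each column of $T_F$ either vanishes or picks out a coordinate via a column of $I_{|F|}$) is preserved under the substitution. Since $T_F$ is a fixed Boolean selection matrix and addition is over the binary field, distributivity gives this immediately, so no genuine difficulty arises. I would therefore present the argument as a short direct proof: specialize \eqref{general_security_blkmat_formula} with $B_0=\mathbf{0}_{t,m}$, apply the substitution $Y=X^{(1)}+X^{(2)}$ to reduce the three block rows to the two block rows of \eqref{perfect_white_security_blkmat_formula}, and invoke Corollary \ref{general_security_theorem_mat_SXVCS} to conclude the claimed equivalence with $S_0[F]=S_1[F]$.
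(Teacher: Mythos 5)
Your proposal is correct and follows essentially the same route as the paper's own proof: both split the unknown of \eqref{general_security_blkmat_formula} into two $n\times m$ blocks, pass from a solution of the large system to $X^{(1)}+X^{(2)}$ as a solution of \eqref{perfect_white_security_blkmat_formula}, and conversely pad a solution $Y$ of the small system to $\left[\begin{smallmatrix}\mathbf{0}_{n,m}\\ Y\end{smallmatrix}\right]$. The only difference is expository: you spell out the linearity check for the $T_{F}$ block, which the paper leaves implicit.
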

		\begin{proof}
			\par{We only need to check that \eqref{general_security_blkmat_formula} and \eqref{perfect_white_security_blkmat_formula} are consistent or inconsistent simultaneously. }
			\par{When \eqref{general_security_blkmat_formula} has a particular solution $X_{0}=\left[\begin{array}{c}
					X_{1}\\
					X_{2}
				\end{array}\right]$ where both $X_{1},X_{2}$ are of size $n\times m$. Then it is easy to check that $X_{1}+X_{2}$ is a solution to \eqref{perfect_white_security_blkmat_formula}.}
			\par{Conversely, if \eqref{perfect_white_security_blkmat_formula} has a particular solution $X_{0}$, then $\left[\begin{array}{c}
					\mathbf{0_{n,m}}\\
					X_{0}
				\end{array}\right]$ is a solution to \eqref{general_security_blkmat_formula}. The proof is thus finished.}
		\end{proof}
		
		\par{The reason for highlighting the condition of $B_{0}=\mathbf{0_{t,m}}$(i.e. perfect white pixel reconstruction) will be explained in next subsection. Now we just point out that \cite{Shen2017}'s work is just under the condition of $B_{0}=\mathbf{0_{t,1}}$.}
		
		\par{After these many lemmas and corollaries, we show one example as follows.}
		
		\begin{example}
			\par{Given the qualified matrix(it is actually a $(2,3)$ scheme): $$\mathbf{\Gamma_{Qual}}=\left[
				\begin{array}{ccc}
					1&1&0\\
					1&0&1\\
					0&1&1\\
				\end{array}
				\right],$$
				Now, with the expectation of pixel expansion $2$, construct a $(2,3)$-SXVCS.
			}
			\par{Firstly, we write down the systems like this: 
				\begin{eqnarray*}
					\left[
					\begin{array}{ccc}
						1&1&0\\
						1&0&1\\
						0&1&1\\
					\end{array}
					\right]
					X&=&
					\left[
					\begin{array}{cc}
						0&0\\
						0&0\\
						0&0\\
					\end{array}
					\right],\label{eq08}\\		
					\left[
					\begin{array}{ccc}
						1&1&0\\
						1&0&1\\
						0&1&1\\
					\end{array}
					\right]
					X&=&
					\left[
					\begin{array}{cc}
						1&0\\
						0&1\\
						1&1\\
					\end{array}
					\right],\label{eq09}
				\end{eqnarray*}
			}
			
			\par{One thing you may wonder is that why $B_{0}$ is defined as $\mathbf{0_{3,2}}$ and $B_{1}$ is defined as $\left[
				\begin{array}{cc}
					1&0\\
					0&1\\
					1&1\\
				\end{array}
				\right]$. However, it is too complex to explain at this point. We will give a useful algorithm with complexity $O(n)$ for $(2,n)$-SXVCS later but we still have not found an effective algorithm for general SXVCS. Here, just check the contrast condition by lemma \ref{contrast_SXVCS} first.
			}
			\par{Now we consider the security condition. It's obvious that $\mathbf{\Gamma_{Forb}}=\left[
				\begin{array}{ccc}
					1&0&0\\
					0&1&0\\
					0&0&1
				\end{array}
				\right]$. So by \ref{general_security_theorem_mat_SXVCS}, we have
				\begin{eqnarray*}
					\left[
					\begin{array}{ccc:ccc}
						1&1&0&0&0&0\\
						1&0&1&0&0&0\\
						0&1&1&0&0&0\\ \hdashline[1pt/1pt]
						0&0&0&1&1&0\\
						0&0&0&1&0&1\\
						0&0&0&0&1&1\\ \hdashline[1pt/1pt]
						1&0&0&1&0&0\\
					\end{array}
					\right]X&=&
					\left[
					\begin{array}{cc}
						0&0\\
						0&0\\
						0&0\\ \hdashline[1pt/1pt]
						1&0\\
						0&1\\
						1&1\\ \hdashline[1pt/1pt]
						0&0\\
					\end{array}
					\right],\label{eq10}\\
					\left[
					\begin{array}{ccc:ccc}
						1&1&0&0&0&0\\
						1&0&1&0&0&0\\
						0&1&1&0&0&0\\ \hdashline[1pt/1pt]
						0&0&0&1&1&0\\
						0&0&0&1&0&1\\
						0&0&0&0&1&1\\ \hdashline[1pt/1pt]
						0&1&0&0&1&0\\
					\end{array}
					\right]X&=&
					\left[
					\begin{array}{cc}
						0&0\\
						0&0\\
						0&0\\ \hdashline[1pt/1pt]
						1&0\\
						0&1\\
						1&1\\ \hdashline[1pt/1pt]
						0&0\\
					\end{array}
					\right],\label{eq11}\\
					\left[
					\begin{array}{ccc:ccc}
						1&1&0&0&0&0\\
						1&0&1&0&0&0\\
						0&1&1&0&0&0\\ \hdashline[1pt/1pt]
						0&0&0&1&1&0\\
						0&0&0&1&0&1\\
						0&0&0&0&1&1\\ \hdashline[1pt/1pt]
						0&0&1&0&0&1\\
					\end{array}
					\right]X&=&
					\left[
					\begin{array}{cc}
						0&0\\
						0&0\\
						0&0\\ \hdashline[1pt/1pt]
						1&0\\
						0&1\\
						1&1\\ \hdashline[1pt/1pt]
						0&0\\
					\end{array}
					\right].\label{eq12}
				\end{eqnarray*}
				If they are all consistent, then the solution set of the $2$ linear systems given above can construct an $(2,3)$-SXVCS.
			}
			
			\par{We can use \ref{security_perfect_white_pixel_SXVCS} as well since it is simpler.}
			\begin{eqnarray*}
				\left[
				\begin{array}{ccc}
					1&1&0\\
					1&0&1\\
					0&1&1\\ \hdashline[1pt/1pt]
					1&0&0\\
				\end{array}
				\right]
				X&=&
				\left[
				\begin{array}{cc}
					1&0\\
					0&1\\
					1&1\\ \hdashline[1pt/1pt]
					0&0\\
				\end{array}
				\right]\label{eq13}\\
				\left[
				\begin{array}{ccc}
					1&1&0\\
					1&0&1\\
					0&1&1\\ \hdashline[1pt/1pt]
					0&1&0\\
				\end{array}
				\right]
				X&=&
				\left[
				\begin{array}{cc}
					1&0\\
					0&1\\
					1&1\\ \hdashline[1pt/1pt]
					0&0\\
				\end{array}
				\right],\label{eq14}\\
				\left[
				\begin{array}{ccc}
					1&1&0\\
					1&0&1\\
					0&1&1\\ \hdashline[1pt/1pt]
					0&0&1\\
				\end{array}
				\right]
				X&=&
				\left[
				\begin{array}{cc}
					1&0\\
					0&1\\
					1&1\\ \hdashline[1pt/1pt]
					0&0\\
				\end{array}
				\right].\label{eq15}
			\end{eqnarray*}
			\par{Since these $3$ systems given by \ref{security_perfect_white_pixel_SXVCS} are really consistent, the following sets can construct an SXVCS. 
				$$S_{0}=\left\{\left[\begin{array}{cc}
					1&	1\\
					1&	1\\
					1&	1
				\end{array}\right],\left[\begin{array}{cc}
					0&	1\\
					0&	1\\
					0&	1
				\end{array}\right],\left[\begin{array}{cc}
					1&	0\\
					1&	0\\
					1&	0
				\end{array}\right],\left[\begin{array}{cc}
					0&	0\\
					0&	0\\
					0&	0
				\end{array}\right]\right\}$$
				$$S_{1}=\left\{\left[\begin{array}{cc}
					1&	1\\
					0&	1\\
					1&	0
				\end{array}\right],\left[\begin{array}{cc}
					0&	1\\
					1&	1\\
					0&	0
				\end{array}\right],\left[\begin{array}{cc}
					1&	0\\
					0&	0\\
					1&	1
				\end{array}\right],\left[\begin{array}{cc}
					0&	0\\
					1&	0\\
					0&	1
				\end{array}\right]\right\}$$
			}
			
		\end{example}
		
		\subsection{SXVCS with perfect white pixel reconstruction}
		\par{
			It seems easier to construct an SXVCS with its white pixel perfectly reconstructed than to construct an SXVCS with unknown $B_{0}$. Here we will prove that if we have already had an SXVCS, then we can construct an SXVCS with perfect white pixel reconstruction and the same pixel expansion.
			\begin{lemma}
				\par{Given an access structure. Then an SXVCS of pixel expansion $m$ indicates that an SXVCS of the same pixel expansion but with perfect white pixel reconstruction exists.}
			\end{lemma}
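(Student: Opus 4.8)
The plan is to exploit the defining feature of a \emph{static} scheme: because every matrix in $C_{0}$ collapses a qualified set to exactly the same reconstructed vector, a single fixed matrix can be used to cancel all the white reconstructions at once. Concretely, suppose $(C_{0},C_{1})$ is an SXVCS of pixel expansion $m$ with static reconstructions $E_{0}(Q)=\oplus(M_{0}[Q])$ and $E_{1}(Q)=\oplus(M_{1}[Q])$ for every $Q\in\Gamma_{Qual}$. I would fix an arbitrary $M_{0}^{*}\in C_{0}$ and form the shifted collections $C_{0}'=\{M_{0}\oplus M_{0}^{*}:M_{0}\in C_{0}\}$ and $C_{1}'=\{M_{1}\oplus M_{0}^{*}:M_{1}\in C_{1}\}$, where $\oplus$ between two $n\times m$ Boolean matrices denotes entrywise XOR. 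Each element of $C_{0}'$ and $C_{1}'$ is again a Boolean $n\times m$ matrix, so the pixel expansion stays equal to $m$ and the cardinalities are unchanged.

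First I would record the linearity of row-stacking over the binary field: restricting to the rows of $Q$ commutes with entrywise XOR, and XOR-stacking rows is additive, so $\oplus\bigl((M\oplus N)[Q]\bigr)=\oplus(M[Q])\oplus\oplus(N[Q])$. Applying this to the static condition, for every $Q\in\Gamma_{Qual}$ and every $M_{0}\in C_{0}$ one gets $\oplus\bigl((M_{0}\oplus M_{0}^{*})[Q]\bigr)=E_{0}(Q)\oplus E_{0}(Q)=\mathbf{0}$, which is exactly perfect white pixel reconstruction; this value is moreover independent of the choice of $M_{0}$, so the white reconstruction of $C_{0}'$ is static. Likewise $\oplus\bigl((M_{1}\oplus M_{0}^{*})[Q]\bigr)=E_{1}(Q)\oplus E_{0}(Q)$ is independent of $M_{1}$, so $C_{1}'$ is static with common value $E_{1}'(Q)=E_{1}(Q)\oplus E_{0}(Q)$.

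Next I would verify the two conditions of Definition \ref{SXVCS} for $(C_{0}',C_{1}')$. Since the new white weight is $\omega(\mathbf{0})=0$, the static contrast condition reduces to showing $E_{1}'(Q)\neq\mathbf{0}$; this follows from the original inequality $\omega(E_{0}(Q))<\omega(E_{1}(Q))$, which forces $E_{0}(Q)\neq E_{1}(Q)$ and hence $E_{0}(Q)\oplus E_{1}(Q)\neq\mathbf{0}$, giving $0=\omega(E_{0}'(Q))<\omega(E_{1}'(Q))$. For the security condition I would use that XOR by the fixed matrix $M_{0}^{*}[F]$ is a bijection on Boolean matrices of that size: because $C_{0}'[F]=\{M_{0}[F]\oplus M_{0}^{*}[F]:M_{0}\in C_{0}\}$ and $C_{1}'[F]=\{M_{1}[F]\oplus M_{0}^{*}[F]:M_{1}\in C_{1}\}$ are the images of $C_{0}[F]$ and $C_{1}[F]$ under this bijection, and the original scheme has $C_{0}[F]$ and $C_{1}[F]$ containing the same matrices with the same frequencies, the same multiset equality survives the shift for every $F\in\Gamma_{Forb}$.

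The argument is routine once the shift is chosen; the single place that genuinely uses the hypothesis — and the point I would emphasize as the main obstacle — is the collapse $E_{0}(Q)\oplus E_{0}(Q)=\mathbf{0}$, valid precisely because \emph{staticness} guarantees that $\oplus(M_{0}^{*}[Q])$ equals the common value $E_{0}(Q)$ for every $M_{0}\in C_{0}$. For a non-static XVCS a single fixed $M_{0}^{*}$ could not annihilate all white reconstructions simultaneously, so everything hinges on this step; the remaining items (linearity of XOR-stacking, bijectivity of the shift on forbidden rows) are bookkeeping. Equivalently, phrased through Corollary \ref{insertion_SXVCS}, this shift replaces the pair of systems $\mathbf{\Gamma_{Qual}}X=B_{0}$, $\mathbf{\Gamma_{Qual}}X=B_{1}$ by $\mathbf{\Gamma_{Qual}}X=\mathbf{0}_{t,m}$, $\mathbf{\Gamma_{Qual}}X=B_{0}\oplus B_{1}$, with $M_{0}^{*}$ taken to be a particular solution of the first system, so the construction also sits naturally inside the linear-systems framework of the paper.
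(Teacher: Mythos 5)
Your proof is correct, and it reaches the same new scheme as the paper does, but by a genuinely more elementary route. The paper's own proof first invokes Corollary \ref{insertion_SXVCS} to replace the given SXVCS by the full solution sets $S_{0},S_{1}$ of $\mathbf{\Gamma_{Qual}}X=B_{0}$ and $\mathbf{\Gamma_{Qual}}X=B_{1}$, and then takes as the new scheme the pair $S_{w}$ (solutions of the homogeneous system) together with the Minkowski sum $S_{1}+S_{0}=\{X_{1}+X_{0}\,|\,X_{1}\in S_{1},X_{0}\in S_{0}\}$, which is the solution set of $\mathbf{\Gamma_{Qual}}X=B_{0}+B_{1}$; security is then checked through Lemma \ref{security_theorem} (one coincidence $X_{0}[F]=X_{1}[F]$ forces $S_{0}[F]=S_{1}[F]$), and contrast exactly as you argue it, from $\omega(B_{0}[\{k\}])<\omega(B_{1}[\{k\}])$. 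You instead never leave the given collections: you shift both $C_{0}$ and $C_{1}$ by one fixed $M_{0}^{*}\in C_{0}$, use staticness to annihilate every white reconstruction at once, and obtain security from the observation that XOR with the fixed matrix $M_{0}^{*}[F]$ is a bijection, hence preserves multiset equality of the restrictions. The two constructions literally coincide when yours is specialized to solution sets: since $M_{0}^{*}\in S_{0}$ and $S_{w}$ is a subspace, $S_{0}\oplus M_{0}^{*}=S_{w}$ and $S_{1}\oplus M_{0}^{*}=S_{w}+(X_{1}+M_{0}^{*})=S_{1}+S_{0}$, which is essentially the correspondence you note in your closing remark. What your version buys is self-containedness and generality: it needs neither Corollary \ref{insertion_SXVCS} nor Lemma \ref{security_theorem}, it applies to arbitrary basis-matrix collections rather than full solution sets, and it isolates precisely the one step where staticness is indispensable. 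What the paper's version buys is that the output arrives already in the canonical linear-systems form ($\mathbf{\Gamma_{Qual}}X=\mathbf{0}_{t,m}$ and $\mathbf{\Gamma_{Qual}}X=B_{0}+B_{1}$) that the subsequent sections of the paper consume, with the new collections realized as full solution sets.
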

			\begin{proof}
				\par{By \ref{insertion_SXVCS}, we suppose that the SXVCS is constructed from $S_{0}$ and $S_{1}$ which are the solution sets of \eqref{white_SXVCS} and \eqref{black_SXVCS}. Then, consider these $2$ sets: $S_{w}$ and $S_{1}+S_{0}:=\{X|X=X_{1}+X_{0},X_{1}\in S_{1},X_{0}\in S_{0}\}$. In fact, $S_{1}+S_{0}=S_{w}+(X_{1}+X_{0})(X_{1}\in S_{1},X_{0}\in S_{0})$ and is the solution set of $\mathbf{\Gamma_{Qual}}X=B_{0}+B_{1}$. No wonder the contrast condition has already been satisfied by the $2$ sets since $\omega(B_{0}[\{k\}])<\omega(B_{1}[\{k\}])(\forall k=1,2,\cdots,t)$. And for all $F\in\Gamma_{Forb}$, $S_{0}[F]=S_{1}[F]$ so $\mathbf{0_{|F|,m}}\in (S_{1}+S_{0})[F]$ and $\mathbf{0_{|F|,m}}\in S_{w}[F]$. By lemma \ref{security_theorem}, the security condition fits. So the set $S_{w}$ and $S_{0}+S_{1}$ construct an SXVCS of the same pixel expansion but with perfect white pixel reconstruction.}
			\end{proof}
		}
		\par{It may look disappointing since the lemma indicates that an access structure with an SXVCS on it must at least have an XVCS with perfect white pixel reconstruction on it. However, with deeper research, we have already proved that it is equivalent to say "there exists an SXVCS" and "there exists an XVCS with perfect white pixel".}
		
		\par{Although SXVCS and SXVCS with perfect white pixel are equivalent in existence, SXVCS is not always an SXVCS with perfect white pixel reconstruction. Still take $(2,3)$-SXVCS as an example:}
		\begin{example}
			\par{Given the qualified matrix: $$\mathbf{\Gamma_{Qual}}=\left[
				\begin{array}{ccc}
					1&1&0\\
					1&0&1\\
					0&1&1\\
				\end{array}
				\right],$$
				Now, with the expectation of pixel expansion $3$, construct a $(2,3)$-SXVCS.
			}
			\par{Firstly, we write down the systems by contrast condition like this: 
				\begin{eqnarray*}
					\left[
					\begin{array}{ccc}
						1&1&0\\
						1&0&1\\
						0&1&1\\
					\end{array}
					\right]
					X&=&
					\left[
					\begin{array}{ccc}
						0&0&1\\
						0&0&0\\
						0&0&1\\
					\end{array}
					\right],\label{eq08}\\		
					\left[
					\begin{array}{ccc}
						1&1&0\\
						1&0&1\\
						0&1&1\\
					\end{array}
					\right]
					X&=&
					\left[
					\begin{array}{ccc}
						1&0&1\\
						0&1&0\\
						1&1&1\\
					\end{array}
					\right],\label{eq09}
				\end{eqnarray*}
			}
			
			\par{Now we consider the security condition. It's obvious that $\mathbf{\Gamma_{Forb}}=\left[
				\begin{array}{ccc}
					1&0&0\\
					0&1&0\\
					0&0&1
				\end{array}
				\right]$. So by \ref{general_security_theorem_mat_SXVCS}, we have
				\begin{eqnarray*}
					\left[
					\begin{array}{ccc:ccc}
						1&1&0&0&0&0\\
						1&0&1&0&0&0\\
						0&1&1&0&0&0\\ \hdashline[1pt/1pt]
						0&0&0&1&1&0\\
						0&0&0&1&0&1\\
						0&0&0&0&1&1\\ \hdashline[1pt/1pt]
						1&0&0&1&0&0\\
					\end{array}
					\right]X&=&
					\left[
					\begin{array}{ccc}
						0&0&1\\
						0&0&0\\
						0&0&1\\ \hdashline[1pt/1pt]
						1&0&1\\
						0&1&0\\
						1&1&1\\ \hdashline[1pt/1pt]
						0&0&0\\
					\end{array}
					\right],\label{eq10}\\
					\left[
					\begin{array}{ccc:ccc}
						1&1&0&0&0&0\\
						1&0&1&0&0&0\\
						0&1&1&0&0&0\\ \hdashline[1pt/1pt]
						0&0&0&1&1&0\\
						0&0&0&1&0&1\\
						0&0&0&0&1&1\\ \hdashline[1pt/1pt]
						0&1&0&0&1&0\\
					\end{array}
					\right]X&=&
					\left[
					\begin{array}{ccc}
						0&0&1\\
						0&0&0\\
						0&0&1\\ \hdashline[1pt/1pt]
						1&0&1\\
						0&1&0\\
						1&1&1\\ \hdashline[1pt/1pt]
						0&0&0\\
					\end{array}
					\right],\label{eq11}\\
					\left[
					\begin{array}{ccc:ccc}
						1&1&0&0&0&0\\
						1&0&1&0&0&0\\
						0&1&1&0&0&0\\ \hdashline[1pt/1pt]
						0&0&0&1&1&0\\
						0&0&0&1&0&1\\
						0&0&0&0&1&1\\ \hdashline[1pt/1pt]
						0&0&1&0&0&1\\
					\end{array}
					\right]X&=&
					\left[
					\begin{array}{ccc}
						0&0&1\\
						0&0&0\\
						0&0&1\\ \hdashline[1pt/1pt]
						1&0&1\\
						0&1&0\\
						1&1&1\\ \hdashline[1pt/1pt]
						0&0&0\\
					\end{array}
					\right].\label{eq12}
				\end{eqnarray*}
				Since they are all consistent, then the solution set of the $2$ linear systems given above can construct an $(2,3)$-SXVCS.
			}
		\end{example}
		
		
		
		
		
	
    \section{XVCS Construction via $2k$ linear equations systems}\label{s4}
	\subsection{The $2k$ Linear systems}

    \par{In Section \ref{s3}, we have already demonstrated that, provided they satisfy Lemma \ref{lem01} and \ref{contrast_SXVCS}, \eqref{white_SXVCS} and \eqref{black_SXVCS} can be used to construct an SXVCS. The question arises as to why the construction is SXVCS instead of general XVCS. In fact, for an XVCS with basis matrices $C_{0}$ and $C_{1}$, if we randomly choose $M_{0},M_{0}'\in C_{0}$, then $\mathbf{\Gamma_{Qual}}M_{0}$ and $\mathbf{\Gamma_{Qual}}M_{0}'$ are not always equal with each other unless it is an SXVCS or at least a SemiSXVCS. Given this observation, it is a natural progression to extend the theorem of construction by $2$ linear systems to encompass $2k$ linear systems.}
	
	\par{Here is the $2k$ linear systems:
		\begin{eqnarray}
			\mathbf{\Gamma_{Qual}}X=B_{01},\ \mathbf{\Gamma_{Qual}}X=B_{02},\ ...,\ \mathbf{\Gamma_{Qual}}X=B_{0k}\label{w_eqs}\\
			\mathbf{\Gamma_{Qual}}X=B_{11},\ \mathbf{\Gamma_{Qual}}X=B_{12},\ ...,\ \mathbf{\Gamma_{Qual}}X=B_{1k}\label{b_eqs}
		\end{eqnarray}
		$B_{ij}$ is boolean matrix of size $t\times m(i=0,1;j=1,2,\cdots,k)$. $X$ is a boolean matrix of size $n\times m$. $t=|\Gamma_{Qual}|$. We denote the solution set of $\mathbf{\Gamma_{Qual}}X=B_{ij}$ as $S_{ij}(i=0,1;j=1,2,\cdots,k)$ and denote the multiset $\coprod_{j=1}^{k}S_{ij}$ as $S_{i}(i=0,1)$. 
	}
	
	\par{Similar to what we have done in Section \ref{s3}, we give the equivalent condition of $S_{0}$ and $S_{1}$ constructing an XVCS.}
	
	\par{By \ref{security_theorem}, it is easy to prove lemma \ref{security_theorem_complete} as follows.}
	\begin{lemma}\label{security_theorem_complete}
		$S_{0}[F]=S_{1}[F]\Leftrightarrow$ there exists bijection $f_{F}:\{1,2,3,...,k\}\rightarrow \{1,2,3,...,k\}\ s.t.\ \forall i \in \{1,2,3,...,k\},\exists X_{0i}\in S_{0i},X_{1f_{F}(i)}\in S_{1f_{F}(i)}\ s.t.\ X_{0i}[F]=X_{1f_{F}(i)}[F]$
		
	\end{lemma}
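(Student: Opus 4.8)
The plan is to reduce the claim to a statement about cosets and then count multiplicities. By Lemma \ref{lem01} each $S_{ij}$ is the coset $S_w+X_{ij}$ of the homogeneous solution space $S_w$ of $\mathbf{\Gamma_{Qual}}X=\mathbf{0}$, and since the restriction map $\phi_F:X\mapsto X[F]$ is linear, $S_w[F]$ is a subspace and each restricted set $S_{ij}[F]=\phi_F(X_{ij})+S_w[F]$ is a coset of it. Two cosets of the same subspace are either identical or disjoint, so by Lemma \ref{security_theorem} (applied to the pair of systems with solution sets $S_{0i}$ and $S_{1j}$) we have $S_{0i}[F]=S_{1j}[F]$ if and only if they share a common element, i.e.\ iff there exist $X_{0i}\in S_{0i},X_{1j}\in S_{1j}$ with $X_{0i}[F]=X_{1j}[F]$. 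Hence the per-index condition in the statement is exactly $S_{0i}[F]=S_{1f_F(i)}[F]$, and the whole lemma becomes: $S_0[F]=S_1[F]$ as multisets if and only if the multiset of cosets $\{S_{0j}[F]\}_{j=1}^{k}$ equals the multiset $\{S_{1j}[F]\}_{j=1}^{k}$.

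The direction $(\Leftarrow)$ is immediate: if $f_F$ is a bijection with $S_{0i}[F]=S_{1f_F(i)}[F]$ for all $i$, then, since multiset union is commutative and $f_F$ merely permutes the index set,
\[
S_0[F]=\coprod_{i=1}^{k} S_{0i}[F]=\coprod_{i=1}^{k} S_{1f_F(i)}[F]=\coprod_{j=1}^{k} S_{1j}[F]=S_1[F].
\]

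For $(\Rightarrow)$ I would first record the multiplicity bookkeeping that makes the reduction legitimate. Fix a coset $c$ of $S_w[F]$ and any $v\in c$; a short fiber computation shows that the number of $X\in S_{ij}$ with $X[F]=v$ equals $\mu:=|S_w\cap\ker\phi_F|$ when $S_{ij}[F]=c$ and is $0$ otherwise, and crucially $\mu$ depends only on $F$, never on $i,j$ (and $\mu\ge 1$ since $0\in S_w\cap\ker\phi_F$). Therefore the multiplicity of any $v\in c$ inside $S_0[F]$ is $\mu\cdot a_c$ with $a_c:=|\{j:S_{0j}[F]=c\}|$, and inside $S_1[F]$ it is $\mu\cdot b_c$ with $b_c:=|\{j:S_{1j}[F]=c\}|$. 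The hypothesis $S_0[F]=S_1[F]$ then forces $\mu a_c=\mu b_c$, and cancelling $\mu$ gives $a_c=b_c$ for every coset $c$. This is precisely the equality of the multisets $\{S_{0j}[F]\}_j$ and $\{S_{1j}[F]\}_j$, which yields a bijection $f_F$ with $S_{0i}[F]=S_{1f_F(i)}[F]$; the first paragraph then supplies the required $X_{0i}$ and $X_{1f_F(i)}$.

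I expect the only real obstacle to be this $(\Rightarrow)$ direction, and within it the observation that the common fibre size $\mu$ is constant across all $2k$ systems so that it cancels cleanly. Without constant $\mu$ one could not downgrade the equality of the large multisets $S_0[F],S_1[F]$ to equality of the small multisets of cosets. Everything else is routine, resting on the coset dichotomy and the already-established Lemma \ref{security_theorem}.
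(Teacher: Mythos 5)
Your proposal is correct, and its $(\Rightarrow)$ direction takes a genuinely different route from the paper's. The paper proves $(\Rightarrow)$ by greedy peeling: pick any $X_{01}\in S_{01}$; the multiset equality $S_{0}[F]=S_{1}[F]$ produces a partner $X_{1f_{F}(1)}\in S_{1f_{F}(1)}$ with the same restriction; Lemma \ref{security_theorem} upgrades this single coincidence to $S_{01}[F]=S_{1f_{F}(1)}[F]$; one then deletes the matched pair, asserts $(S_{0}\setminus S_{01})[F]=(S_{1}\setminus S_{1f_{F}(1)})[F]$, and iterates $k$ times to assemble the bijection. You instead make one global count: by Lemma \ref{lem01} and linearity of the restriction $\phi_{F}:X\mapsto X[F]$, each $S_{ij}[F]$ is a coset of the subspace $S_{w}[F]$, and every element of it occurs with the same multiplicity $\mu=|S_{w}\cap\ker\phi_{F}|$, constant across all $2k$ systems because they share the homogeneous part $S_{w}$; hence the multiplicity of $v$ in $S_{i}[F]$ is $\mu$ times the number of indices $j$ with $S_{ij}[F]$ equal to the coset of $v$, and cancelling $\mu\geq 1$ converts $S_{0}[F]=S_{1}[F]$ into equality of the two multisets of cosets, which is exactly the bijection. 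What your version buys is rigor precisely where the paper is loosest: the paper's deletion step is valid only because $S_{01}[F]$ and $S_{1f_{F}(1)}[F]$ agree as \emph{multisets} and not merely as sets, and that is your constant-$\mu$ fiber computation, which the paper never states; conversely, the paper's peeling is shorter and needs no counting at all. Your $(\Leftarrow)$ argument coincides with the paper's (apply Lemma \ref{security_theorem} indexwise and permute the union). One shared tacit hypothesis in both arguments is that every system $\mathbf{\Gamma_{Qual}}X=B_{ij}$ is consistent, so that each $S_{ij}$ is a nonempty coset.
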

	\begin{proof}
		\par{\begin{itemize}
				\item["$\Leftarrow$"] By \ref{security_theorem}, $S_{0j}[F]=S_{1f_{F}(j)}[F](\forall j=1,2,\cdots,k)$. So it is clear that $S_{0}[F]=S_{1}[F]$ as a result.
				\item["$\Rightarrow$"] Since $S_{0}[F]=S_{1}[F]$, for any $X_{01}\in S_{01}\subseteq S_{0}$, there exists $X_{1f_{F}(1)}\in S_{1f_{F}(1)}\subseteq S_{1}\ s.t. X_{01}[F]=X_{1f_{F}(1)}[F]$. By lemma \ref{security_theorem}, $S_{01}[F]=S_{1f_{F}(1)}[F]$. Then, $(S_{0}\setminus S_{01})[F]=(S_{1}\setminus S_{1f_{F}(1)})[F]$. Similar to what we have done just now, we can select $X_{02}$ from $S_{02}\subseteq S_{0}\setminus S_{01}$ and find $X_{1f_{F}(2)}\in S_{1f_{F}(2)}\subseteq S_{1}\setminus S_{1f_{F}(1)}\ s.t. X_{02}[F]=X_{1f_{F}(2)}[F]$. Then again we have $(S_{0}\setminus \coprod_{j=1}^{2}S_{0j})[F]=(S_{1}\setminus \coprod_{j=1}^{2}S_{1f_{F}(j)})[F]$. Repeat this procedure for $k$ times and we will obtain a bijection $f_{F}$ as needed which ends the proof.
		\end{itemize}}
	\end{proof}
	
	\par{The contrast condition is even clearer than the definition of XVCS itself:
		\begin{lemma}
			Given an access structure $(\Gamma_{Qual},\Gamma_{Forb})$. $S_{0}$ and $S_{1}$ satisfies contrast condition $\iff \omega(B_{0i}[\{s\}])<\omega(B_{1j}[\{s\}])(\forall i,j=1,2,\cdots,k; \forall s=1,2,\cdots,t)$. 
		\end{lemma}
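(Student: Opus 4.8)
The plan is to collapse the contrast condition of Definition~\ref{VCS}, which quantifies over \emph{all} pairs of basis matrices $M_{0}\in S_{0}$, $M_{1}\in S_{1}$ and over all qualified sets, into a condition referring only to the right-hand sides $B_{0i}$ and $B_{1j}$ of the $2k$ systems \eqref{w_eqs} and \eqref{b_eqs}. The engine is the block-multiplication identity already exploited in Section~\ref{s3}: if $\mathbf{Q_{s}}$ denotes the $s$th row of $\mathbf{\Gamma_{Qual}}$ (the bitset vector of the qualified set $Q_{s}$), then for every boolean matrix $X$ the $s$th row of $\mathbf{\Gamma_{Qual}}X$ equals $\mathbf{Q_{s}}X=\oplus(X[Q_{s}])$, the XOR-stack of the rows of $X$ indexed by $Q_{s}$.

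First I would record the consequence of this identity for solutions. If $X\in S_{ij}$, i.e. $\mathbf{\Gamma_{Qual}}X=B_{ij}$, then $\oplus(X[Q_{s}])=B_{ij}[\{s\}]$ for every $s\in\{1,\dots,t\}$, and the crucial point is that this stacked pixel depends only on the index pair $(i,j)$ of the system, never on the particular solution $X$ chosen from $S_{ij}$. Since every row of $\mathbf{\Gamma_{Qual}}$ is a qualified set and conversely, letting $s$ range over $\{1,\dots,t\}$ is the same as letting $Q$ range over $\Gamma_{Qual}$.

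Second I would unfold the contrast condition. Take arbitrary $M_{0}\in S_{0}$ and $M_{1}\in S_{1}$; by definition of the multisets $S_{0}=\coprod_{j}S_{0j}$ and $S_{1}=\coprod_{j}S_{1j}$ we have $M_{0}\in S_{0i}$ for some $i$ and $M_{1}\in S_{1j}$ for some $j$. By the previous step, for the qualified set $Q_{s}$ we get $\oplus(M_{0}[Q_{s}])=B_{0i}[\{s\}]$ and $\oplus(M_{1}[Q_{s}])=B_{1j}[\{s\}]$, so the inequality $\omega(\oplus(M_{1}[Q_{s}]))>\omega(\oplus(M_{0}[Q_{s}]))$ demanded by Definition~\ref{VCS} is \emph{literally} $\omega(B_{0i}[\{s\}])<\omega(B_{1j}[\{s\}])$. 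Both implications then follow at once: for ``$\Leftarrow$'', if $\omega(B_{0i}[\{s\}])<\omega(B_{1j}[\{s\}])$ for all $i,j,s$ then the strict inequality holds for every admissible triple $(M_{0},M_{1},Q_{s})$; for ``$\Rightarrow$'', given any triple $(i,j,s)$ pick any $M_{0}\in S_{0i}$ and $M_{1}\in S_{1j}$ and read off $\omega(B_{0i}[\{s\}])<\omega(B_{1j}[\{s\}])$ from the contrast condition applied at $Q_{s}$.

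The argument is essentially bookkeeping, so there is no deep obstacle; the one point that must be handled carefully is a standing consistency hypothesis. The forward direction needs each $S_{ij}$ to be nonempty in order to produce witnesses $M_{0}\in S_{0i}$, $M_{1}\in S_{1j}$ for every index; if some system were inconsistent it would contribute no basis matrix, the contrast condition would impose nothing on that $B_{ij}$, and the clean equivalence would break. I would therefore assume, exactly as in Lemma~\ref{contrast_SXVCS}, that all $2k$ systems in \eqref{w_eqs} and \eqref{b_eqs} are consistent, under which the equivalence holds as stated.
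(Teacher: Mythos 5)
Your proof is correct and is exactly the argument the paper has in mind: the paper states this lemma without proof (introducing it as ``even clearer than the definition of XVCS itself''), the intended justification being precisely your block-multiplication bookkeeping, namely that any solution $X\in S_{ij}$ satisfies $\oplus(X[Q_{s}])=B_{ij}[\{s\}]$ independently of which $X$ is chosen, so the contrast inequality for matrices collapses to the inequality on the $B_{ij}$. Your explicit consistency hypothesis for the ``$\Rightarrow$'' direction is a careful and correct addition, mirroring the hypothesis the paper itself imposes in the two-system case (Lemma \ref{contrast_SXVCS}) and implicitly assumes here.
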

	}
	
	\par{As we mentioned in Section \ref{s3}, the following corollary is of great importance in this section. }
	\begin{corollary}[XVCS Insertion Theorem]\label{insertion_theorem_XVCS}
		\par{All XVCS can be inserted into an XVCS constructed from \eqref{w_eqs} and \eqref{b_eqs}. That is, given an XVCS, we can construct another XVCS from \eqref{w_eqs} and \eqref{b_eqs} without influence on the reconstructed picture. }
	\end{corollary}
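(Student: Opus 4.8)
The plan is to mimic the SXVCS insertion argument of Corollary \ref{insertion_SXVCS}, but to spend \emph{one} linear system per matrix of each basis collection, which is precisely what the extra slots in \eqref{w_eqs} and \eqref{b_eqs} buy us over the $2$-system setting. First I would invoke Corollary \ref{better_security_condition} to replace the given XVCS by an equivalent one having $|C_{0}|=|C_{1}|=:k$ and satisfying the stronger security condition $C_{0}[F]=C_{1}[F]$ (as multisets) for every $F\in\Gamma_{Forb}$, with no change to the reconstructed picture. Writing $C_{0}=\{M_{01},\dots,M_{0k}\}$ and $C_{1}=\{M_{11},\dots,M_{1k}\}$, I set $B_{ij}:=\mathbf{\Gamma_{Qual}}M_{ij}$ for $i=0,1$ and $j=1,\dots,k$ and feed these into \eqref{w_eqs} and \eqref{b_eqs}. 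Since $M_{ij}$ solves $\mathbf{\Gamma_{Qual}}X=B_{ij}$ by construction, we have $M_{ij}\in S_{ij}$, hence $C_{i}\subseteq S_{i}=\coprod_{j}S_{ij}$ in the multiset sense.

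The contrast and reconstruction check is the conceptual heart. The key observation is that for the $s$-th qualified set $Q_{s}$ (whose indicator is the $s$-th row $\mathbf{Q_{s}}$ of $\mathbf{\Gamma_{Qual}}$) and \emph{any} $X\in S_{ij}$, block multiplication over the binary field gives $\oplus(X[Q_{s}])=\mathbf{Q_{s}}X=B_{ij}[\{s\}]=\oplus(M_{ij}[Q_{s}])$, which is independent of the particular solution $X$. Thus each solution set $S_{ij}$ contributes the \emph{constant} reconstruction value $\oplus(M_{ij}[Q_{s}])$ at $Q_{s}$, and because every $S_{ij}$ is a translate of $S_{w}$ and hence has the same cardinality $|S_{w}|$, the multiset of reconstruction values produced by $S_{i}$ is exactly $|S_{w}|$ copies of the one produced by $C_{i}$. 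Averages, and therefore contrast, are unchanged, so the reconstructed picture is identical. The pairwise inequality $\omega(B_{0i}[\{s\}])<\omega(B_{1j}[\{s\}])$ for all $i,j,s$ is then just the original contrast condition of Definition \ref{VCS} rewritten, so it holds automatically.

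For security I would appeal to Lemma \ref{security_theorem_complete}: fix $F\in\Gamma_{Forb}$; since $C_{0}[F]=C_{1}[F]$ as multisets, there is a permutation $\sigma_{F}$ of $\{1,\dots,k\}$ with $M_{0j}[F]=M_{1\sigma_{F}(j)}[F]$ for every $j$. Taking $f_{F}:=\sigma_{F}$ and the witnesses $X_{0i}:=M_{0i}\in S_{0i}$, $X_{1f_{F}(i)}:=M_{1\sigma_{F}(i)}\in S_{1\sigma_{F}(i)}$ satisfies $X_{0i}[F]=X_{1f_{F}(i)}[F]$, so Lemma \ref{security_theorem_complete} yields $S_{0}[F]=S_{1}[F]$ for all $F$, i.e.\ the security condition holds. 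Combining the two checks shows that $S_{0},S_{1}$ form an XVCS arising from \eqref{w_eqs}--\eqref{b_eqs} with the same reconstructed image, as claimed. I expect the only genuinely delicate point to be the bookkeeping in the reconstruction step, namely establishing that replacing each $C_{i}$ by the larger $S_{i}$ scales every reconstruction multiset uniformly by the factor $|S_{w}|$ and therefore leaves contrast invariant; everything else is a direct transcription of the $2$-system results into the $2k$-system setting.
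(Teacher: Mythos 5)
Your proposal is correct and follows essentially the same route as the paper's proof: replace the given scheme via Corollary \ref{better_security_condition}, define $B_{ij}:=\mathbf{\Gamma_{Qual}}X_{ij}$, and verify the contrast condition and the hypothesis of Lemma \ref{security_theorem_complete}. In fact you supply exactly the details the paper compresses into ``clearly'' --- the permutation $\sigma_{F}$ witnessing $S_{0}[F]=S_{1}[F]$ and the coset argument showing each $S_{ij}$ contributes $|S_{w}|$ copies of the constant reconstruction value $B_{ij}[\{s\}]$, so contrast is unchanged.
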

	\begin{proof}
		\par{Given an XVCS with $2$ collections of basis matrices $C_{0}$ and $C_{1}$. By \ref{better_security_condition}, suppose that $|C_{0}|=|C_{1}|=k$ and $C_{0}=\{X_{01},X_{02},\cdots,X_{0k}\},C_{1}=\{X_{11},X_{12},\cdots,X_{1k}\}$. Define $B_{ij}:=\mathbf{\Gamma_{Qual}}X_{ij}(i=0,1;j=1,2,\cdots,k)$. Now consider these $2k$ linear systems: $\mathbf{\Gamma_{Qual}}X=B_{ij}(i=0,1;j=1,2,\cdots,k)$. Clearly, the condition in lemma \ref{security_theorem_complete} is satisfied. Also, the contrast condition is satisfied. So the solution sets $S_{0}$ and $S_{1}$ constructs an XVCS, as needed.}
	\end{proof}
	
	\begin{remark}
		\par{This corollary claims that all research on XVCS can be translated into research on XVCS constructed from \eqref{w_eqs} and \eqref{b_eqs}. As is known to all, the structure of linear systems' solution set is much clearer than several sophisticated matrices. From now on, any XVCS mentioned in this paper will be seen as an XVCS constructed from \eqref{w_eqs} and \eqref{b_eqs}. }
	\end{remark}
	
	\subsection{SXVCS and SemiSXVCS}
	\subsubsection{Main Theorem}
	\par{It is an immediate idea that if the given XVCS is a SemiSXVCS, then for any $F\in \Gamma_{Forb}$, which $f_{F}$ to choose is actually inessential since all $S_{0j}$(or all $S_{1j}$) looks completely same. With this idea, we can easily get the theorem followed. }
	\begin{theorem}[SXVCS and SemiSXVCS]\label{SXVCS_and_SemiSXVCS}
		\par{All SemiSXVCS can be viewed as a composition of several SXVCS in the view of linear systems. That is, if a SemiSXVCS is constructed from \eqref{w_eqs} and \eqref{b_eqs} witch contains $2k$ linear systems, then we can construct $k$ SXVCS from these linear systems.}
	\end{theorem}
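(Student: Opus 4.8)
The plan is to reinterpret the semistatic contrast condition in the language of the systems \eqref{w_eqs} and \eqref{b_eqs}. Assume without loss of generality that we are handling a White-SemiSXVCS, the Black case being entirely symmetric. For the qualified set $Q_{s}$ corresponding to the $s$-th row of $\mathbf{\Gamma_{Qual}}$ and any white matrix $M_{0}\in S_{0j}$, the reconstruction equals $\oplus(M_{0}[Q_{s}])=\mathbf{Q_{s}}M_{0}=B_{0j}[\{s\}]$. The semistatic condition forces $E_{0}(Q_{s})$ to be independent of the chosen white matrix, hence independent of $j$; therefore $B_{01}[\{s\}]=\cdots=B_{0k}[\{s\}]$ for every $s$, i.e. $B_{01}=\cdots=B_{0k}=:B_{0}$, and consequently $S_{01}=\cdots=S_{0k}=:S_{0}^{*}$. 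This reduction of the white side to a single repeated system is the structural fact that makes the decomposition possible.

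Next I would produce the $k$ candidate schemes by pairing the single (repeated) white system with each black system separately: for each $j\in\{1,\dots,k\}$ form the two-system pair
\[
\mathbf{\Gamma_{Qual}}X=B_{0},\qquad \mathbf{\Gamma_{Qual}}X=B_{1j},
\]
whose solution sets are $S_{0}^{*}$ and $S_{1j}$. Each such pair is exactly the two-system construction analysed in Section \ref{s3}, so it suffices to verify the static contrast condition and the security condition for every $j$. The contrast condition is immediate: the original XVCS contrast requirement gives $\omega(B_{0i}[\{s\}])<\omega(B_{1j}[\{s\}])$ for all $i,j,s$, and substituting $B_{0i}=B_{0}$ yields $\omega(B_{0}[\{s\}])<\omega(B_{1j}[\{s\}])$, so Lemma \ref{contrast_SXVCS} applies to each pair.

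The security condition is where the real work lies, and it is the step I expect to be the main obstacle. Starting from $S_{0}[F]=S_{1}[F]$ of the given SemiSXVCS, Lemma \ref{security_theorem_complete} supplies a bijection $f_{F}:\{1,\dots,k\}\to\{1,\dots,k\}$ together with matrices $X_{0i}\in S_{0i}=S_{0}^{*}$ and $X_{1f_{F}(i)}\in S_{1f_{F}(i)}$ satisfying $X_{0i}[F]=X_{1f_{F}(i)}[F]$. Because every $S_{0i}$ is the same set $S_{0}^{*}$, Lemma \ref{security_theorem} upgrades each such coincidence to the full equality $S_{0}^{*}[F]=S_{1f_{F}(i)}[F]$. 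The delicate point is the surjectivity of $f_{F}$: since $f_{F}$ is a bijection, $f_{F}(i)$ ranges over all of $\{1,\dots,k\}$, so $S_{0}^{*}[F]=S_{1j}[F]$ holds for every $j$ and every $F\in\Gamma_{Forb}$. This is exactly the manoeuvre that converts the single global matching guaranteed by Lemma \ref{security_theorem_complete} into $k$ separate pairwise matchings, and it relies crucially on the white systems being identical.

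Having verified contrast and security for each $j$, each pair $(S_{0}^{*},S_{1j})$ constructs a genuine SXVCS by the results of Section \ref{s3}, giving the desired $k$ schemes. To close the loop with the word ``composition'' in the statement, I would finally observe that juxtaposing these $k$ SXVCS recovers the original SemiSXVCS: the white part is $\coprod_{j=1}^{k}S_{0}^{*}=\coprod_{j=1}^{k}S_{0j}=S_{0}$ and the black part is $\coprod_{j=1}^{k}S_{1j}=S_{1}$, so the multiset union of the $k$ SXVCS is precisely the given SemiSXVCS, which completes the argument.
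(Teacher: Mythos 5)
Your proposal is correct and takes essentially the same route as the paper's proof: both reduce the white side to a single repeated system (since the semistatic condition forces $B_{01}=\cdots=B_{0k}$, hence $S_{01}=\cdots=S_{0k}$) and then use the bijection $f_{F}$ from Lemma \ref{security_theorem_complete} to conclude $S_{1j}[F]=S_{01}[F]$ for every $j$ and every $F\in\Gamma_{Forb}$, so that each pair $(S_{01},S_{1j})$ is an SXVCS. You merely make explicit some steps the paper leaves implicit, namely the derivation of $B_{01}=\cdots=B_{0k}$ from the semistatic contrast condition and the closing remark that the multiset union of the $k$ SXVCS recovers the original SemiSXVCS.
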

	\begin{proof}
		\par{Given a White-SemiSXVCS, by corollary \ref{insertion_theorem_XVCS}, it can be viewed as an XVCS constructed from \eqref{w_eqs} and \eqref{b_eqs}. We randomly choose $2$ solution sets $S_{0i}$ and $S_{1j}$. Then the $2$ sets satisfy the contrast condition. By lemma \ref{security_theorem_complete}, $\forall F\in\Gamma_{Forb}$, $S_{0i}[F]=S_{1f_{F}(i)}[F]$. Notice that $S_{0i}=S_{0j}(\forall i,j)$, it is easy to find that $S_{1j}[F]=S_{01}[F](\forall j=1,2,\cdots,k)$. So $S_{0i}$(or just write $S_{01}$) and $S_{1j}$ constructs an SXVCS for all $j=1,2,\cdots,k$. This decomposes a SemiSXVCS into several SXVCSs.}
	\end{proof}
	
	\begin{remark}
		\par{In fact, not only SemiSXVCS can be decomposed, but all XVCS can be decomposed. However, general XVCS cannot be decomposed into SXVCS unless it is a SemiSXVCS. They can be decomposed into XVCS with smaller "$k$" i.e. XVCS constructed from less linear systems. We can easily obtain that these kind of decomposition is related to the connectivity of bipartite graph witch is related to the rigidity braced rectangular framework.}
	\end{remark}
	
	\par{This is the main theorem of our work. The next are important corollaries of it. Thanks to corollary \ref{insertion_theorem_XVCS}, we can extend our theorem to the fullest.}
	
	\begin{corollary}\label{OPTIMAL}
		\par{If there is a SemiSXVCS, there is an SXVCS on it with the same pixel expansion and the average contrast no smaller than it. That is, in most cases, both the smallest pixel expansion and the greatest contrast can be reached by some SXVCS when it exists.}
	\end{corollary}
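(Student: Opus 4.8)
The plan is to lean on the decomposition supplied by Theorem \ref{SXVCS_and_SemiSXVCS} and then observe that the average contrast of the given SemiSXVCS is exactly the arithmetic mean of the average contrasts of the SXVCSs it decomposes into; an averaging argument then forces at least one of those SXVCSs to do at least as well.

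First I would assume without loss of generality that the scheme is a White-SemiSXVCS (the Black case being symmetric, via the dual decomposition), and use Corollary \ref{insertion_theorem_XVCS} to realize it as an XVCS built from the $2k$ systems \eqref{w_eqs} and \eqref{b_eqs}, with solution sets $S_{0j}$ and $S_{1j}$. Because the scheme is White-semistatic, every white reconstruction $\oplus(M_0[Q])$ is independent of the chosen $M_0$, so all the right-hand sides $B_{0j}$ coincide row by row over $Q\in\Gamma_{Qual}$; write their common value as $B_0$, so that $S_{01}=S_{02}=\cdots=S_{0k}$. Theorem \ref{SXVCS_and_SemiSXVCS} then produces $k$ SXVCSs, the $j$-th being the pair $(S_{01},S_{1j})$, each built from matrices of size $n\times m$ and hence of the same pixel expansion $m$ as the original scheme.

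Next I would carry out the contrast bookkeeping. By Lemma \ref{lem01} each $S_{ij}$ is a coset $S_w+X_{ij}$ of the common homogeneous solution set $S_w$, so all solution sets share the cardinality $|S_w|$; moreover every element of $S_{1j}$ reconstructs $Q$ as $B_{1j}[\{Q\}]$ and every element of $S_{01}$ reconstructs $Q$ as $B_0[\{Q\}]$. Writing $\alpha_j(Q)$ for the contrast of the $j$-th SXVCS on $Q$, substituting these reconstruction values into the contrast formula of Definition \ref{VCS} gives
$$\alpha_{\mathrm{Semi}}(Q)=\frac{\frac{1}{k}\sum_{j=1}^{k}\omega(B_{1j}[\{Q\}])-\omega(B_0[\{Q\}])}{m}=\frac{1}{k}\sum_{j=1}^{k}\alpha_j(Q).$$
Averaging over all $Q\in\Gamma_{Qual}$ and interchanging the two finite sums shows that the SemiSXVCS average contrast $\alpha_{\mathrm{Semi}}$ equals $\frac{1}{k}\sum_{j=1}^{k}\alpha_j$, the mean of the $k$ SXVCS average contrasts.

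Finally, since a maximum is never below an average, there is an index $j^\star$ with $\alpha_{j^\star}\ge\alpha_{\mathrm{Semi}}$, and the SXVCS $(S_{01},S_{1j^\star})$ then has pixel expansion $m$ and average contrast at least that of the given SemiSXVCS, which is the assertion. I expect the main obstacle to be the contrast bookkeeping of the penultimate step: specifically, justifying that the white right-hand sides genuinely collapse to a single $B_0$ so that the white weight factors cleanly out of the average, and that the equal coset cardinalities $|S_{ij}|=|S_w|$ let the weighted black average over $S_1$ reduce to an unweighted mean over $j$. Once these two reductions are secured, the averaging step is immediate.
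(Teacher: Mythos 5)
Your proposal is correct and follows essentially the same route as the paper: decompose the SemiSXVCS into $k$ SXVCSs via Theorem \ref{SXVCS_and_SemiSXVCS} (same pixel expansion $m$ by construction) and select the one with greatest average contrast. The paper's own proof is a single sentence that leaves implicit the averaging identity $\alpha_{\mathrm{Semi}}=\frac{1}{k}\sum_{j=1}^{k}\alpha_{j}$; your coset bookkeeping with $|S_{ij}|=|S_{w}|$ and the collapse of the white right-hand sides to a single $B_{0}$ supplies exactly the justification the paper omits.
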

	\begin{proof}
		\par{Decomposing the SemiSXVCS into several SXVCSs by \ref{SXVCS_and_SemiSXVCS} and choosing the one with the greatest average contrast will finish the proof.}
	\end{proof}
	
	\begin{remark}
		\par{Attention! It is not right to say the smallest pixel expansion and the greatest average contrast can be reached simultaneously in one SXVCS up to now since we have not proved that yet. The only thing we know is that each of them can be reached by some SXVCS. However, we have already proved that $(2,n)$-XVCS can reach its smallest pixel expansion and greatest contrast in just one SXVCS. We will see that later.}
	\end{remark}

	\subsubsection{SXVCS with Perfect White Pixel Reconstruction}
	
	\par{Notice that PW-XVCS is SemiSXVCS, we may wonder about if there is any result about PW-XVCS.}
	
	\begin{corollary}
		\par{Given a SemiSXVCS with perfect white pixel reconstruction, we can find an SXVCS with perfect white pixel reconstruction and they have the same pixel expansion.}
	\end{corollary}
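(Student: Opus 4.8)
The plan is to specialize the decomposition of Theorem~\ref{SXVCS_and_SemiSXVCS} to the perfect-white setting, where the key new observation is that the perfect white pixel condition forces every white-side solution set to coincide with $S_w$. I first note that a SemiSXVCS with perfect white pixel reconstruction is in particular a White-SemiSXVCS: by definition perfect white pixel reconstruction means $\oplus M_0[Q]\equiv\mathbf{0}$ for every $Q\in\Gamma_{Qual}$ and every $M_0\in C_0$, so the reconstructed white value $E_0(Q)=\mathbf{0}$ is trivially independent of $M_0$, which is exactly the semistatic contrast condition on the white side.

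First I would apply Corollary~\ref{insertion_theorem_XVCS} to view the given scheme as an XVCS built from the $2k$ linear systems \eqref{w_eqs} and \eqref{b_eqs}, with white-side right-hand sides $B_{01},\ldots,B_{0k}$ defined by $B_{0j}:=\mathbf{\Gamma_{Qual}}X_{0j}$ for $X_{0j}\in C_0$. The perfect white pixel hypothesis, written in block form, says exactly $\mathbf{\Gamma_{Qual}}M_0=\mathbf{0}_{t,m}$ for every $M_0\in C_0$, so every $B_{0j}=\mathbf{0}_{t,m}$. Consequently each white-side solution set equals the solution set of \eqref{perfect_white_SXVCS}, i.e.\ $S_{0j}=S_w$ for all $j=1,\ldots,k$.

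Next I would invoke Theorem~\ref{SXVCS_and_SemiSXVCS}, which decomposes the White-SemiSXVCS into $k$ SXVCSs, each of the form $(S_{01},S_{1j})$ with $j$ ranging over $1,\ldots,k$. The decisive point is supplied by the previous step: since $S_{01}=S_w$, every one of these $k$ SXVCSs has its white side equal to $S_w$, equivalently $B_0=\mathbf{0}_{t,m}$, which is precisely perfect white pixel reconstruction. Thus each decomposition factor is not merely an SXVCS but an SXVCS with perfect white pixel reconstruction.

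Finally I would select any one of these factors --- for definiteness, mirroring Corollary~\ref{OPTIMAL}, the one of greatest average contrast. Because the solution matrix $X$ in \eqref{w_eqs} and \eqref{b_eqs} carries $m$ columns throughout, the chosen scheme inherits the pixel expansion $m$ of the original SemiSXVCS, completing the argument. I do not anticipate any genuine obstacle here: the statement is an almost immediate corollary of the main decomposition theorem, and the only substantive ingredient is the identification $S_{0j}=S_w$ forced by the perfect white condition, which upgrades each decomposition factor to the perfect-white case.
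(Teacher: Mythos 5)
Your proposal is correct and follows essentially the same route as the paper, whose proof is the one-liner ``decompose the SemiSXVCS into several SXVCSs by Theorem~\ref{SXVCS_and_SemiSXVCS} and choose one.'' Your only addition is to make explicit what the paper leaves implicit --- that perfect white reconstruction forces $B_{0j}=\mathbf{0}_{t,m}$ and hence $S_{0j}=S_{w}$ for every $j$, so each decomposition factor automatically inherits the perfect-white property at pixel expansion $m$ --- which is a legitimate and welcome elaboration rather than a different argument.
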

	\begin{proof}
		\par{Just decompose the SemiSXVCS into several SXVCSs by \ref{SXVCS_and_SemiSXVCS} and choose one.}
	\end{proof}
	
	\begin{remark}
		\par{If PW-XVCS exists, then SemiXVCS exists (since PW-XVCS is SemiSXVCS). If SemiSXVCS exists, then SXVCS exists (by \ref{SXVCS_and_SemiSXVCS}). If SXVCS exists, then PW-SXVCS exists. Finally, if PW-SXVCS exists, then PW-XVCS exists. The four are equivalent in existence.}
	\end{remark}
	
	\par{ In \cite{Tuyls2005}, Tuyls has already constructed $(2,n)$-SemiSXVCS with the smallest pixel expansion(although he had no concept of SemiSXVCS). This shows that perfect white pixel reconstruction is not that rare in previous work. We will see the optimal $(2,n)$-SXVCS later in the paper.}

    \begin{example}[Quick Determination of Perfect White Pixel Reconstruction]\label{Quick Determination of Perfect White Pixel Reconstruction}
        \par{In fact, this is a skillful method so we will not write it in the form of a theorem. Here we take $(3,4)$-XVCS as an example.}
	    \par{By theorem \ref{SXVCS_and_SemiSXVCS}, we have known that SemiSXVCS exists is equivalent to SXVCS exists. So to determine the existence of SemiSXVCS is just to determine the existence of SemiSXVCS. }
     
        \par{Consider the access structure of $(3,4)$-XVCS. The qualified matrix is given by $\Gamma_{Qual}=\left[
				\begin{array}{cccc}
					1&	1&	1&	0\\
					1&	1&	0&	1\\
					1&	0&	1&	1\\
					0&	1&	1&	1\\
				\end{array}
				\right]$. Consider the systems below: 
				\begin{eqnarray*}
					\mathbf{\Gamma_{Qual}}X=\mathbf{0_{4,m}};\\
					\mathbf{\Gamma_{Qual}}X=\left[
					\begin{array}{c}
						p_{1}\\
						p_{2}\\
						p_{3}\\
						p_{4}\\
					\end{array}
					\right].
				\end{eqnarray*}
				where row vector $p_{i}$ is of size $1\times m(i=1,2,3,4)$. Now just suppose that the second equation is consistent which is a necessary condition of the two systems constructing an SXVCS. By security condition, we claim that $p_{1}+p_{2}=\mathbf{0_{1,m}}$ and $p_{1}+p_{2}+p_{3}=\mathbf{0_{1,m}}$. That means $p_{3}=\mathbf{0_{1,m}}$. Similarly, $p_{i}=0(i=1,2,3,4)$. That contradicts the contrast condition. }	
    \end{example}
    \par{This method can be applied to other conditions. For example, we can easily check that all $(3,n)$-SemiSXVCS does not exists($\forall n> 3$) and as a result $(k,n)$-SemiSXVCS does not exists($\forall k\geq 3,n>k$). The result is an extension of \cite{Tuyls2005}'s proposition 6 which claims that $(k,n)$-XVCS with perfect white pixel reconstruction does not exists($\forall k\geq 3,n>k$). Now we have known the deeper reason for this phenomenon: the existence of SXVCS.}
 
	\subsubsection{SXVCS with Pixel expansion $1$}
	\par{The next is the equivalent condition of the existence of XVCS with pixel expansion $1$ \cite{Shen2017}.}
	
	\begin{lemma}\label{perfect_XVCS_old}
		\par{Given an access structure $(\Gamma_{Qual},\Gamma_{Forb})$, there exists an XVCS with pixel expansion $1$ if and only if the access structure satisfies the $2$ conditions:
			\begin{itemize}
				\item The result of the stack of odd numbers of rows in $\mathbf{\Gamma_{Qual}}$ is not a row in $\mathbf{\Gamma_{Forb}}$;
				\item If the result of the stack of even numbers of rows in $\mathbf{\Gamma_{Qual}}$ corresponds to the bitset $P_{0}$, then $P_{0}=\emptyset$ or $P_{0}\in\{V|\nexists Q\in \Gamma^{-}, V\subseteq Q\}.$ 
		\end{itemize}}
		
	\end{lemma}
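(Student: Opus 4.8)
The plan is to run the whole argument inside the two-systems framework of Section \ref{s3}, specialized to pixel expansion $m=1$. First I would observe that for $m=1$ the static contrast condition of Definition \ref{SXVCS} forces $B_0=\mathbf{0_{t,1}}$ and $B_1=\mathbf{1_{t,1}}$: each reconstructed value $\omega(E_i(Q))$ lies in $\{0,1\}$ and must satisfy $\omega(E_0(Q))<\omega(E_1(Q))$. Since an XVCS of pixel expansion $1$ is automatically an SXVCS, by Corollary \ref{better_security_condition} and Lemma \ref{security_theorem} its existence is equivalent to the existence of nonempty solution sets $S_0,S_1$ of $\mathbf{\Gamma_{Qual}}X=\mathbf{0_{t,1}}$ and $\mathbf{\Gamma_{Qual}}X=\mathbf{1_{t,1}}$ with $S_0[F]=S_1[F]$ for every $F\in\Gamma_{Forb}$. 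The white system is always consistent (take $X=\mathbf{0}$), and because the minimal qualified sets are nonempty we have $\emptyset\in\Gamma_{Forb}$; the instance $F=\emptyset$ of the security requirement is exactly consistency of the black system. Thus the entire existence question collapses to the single clause: for every $F\in\Gamma_{Forb}$, $S_0[F]=S_1[F]$.

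Next I would convert each security instance into a consistency statement. Because $B_0=\mathbf{0_{t,1}}$, Corollary \ref{security_perfect_white_pixel_SXVCS} applies directly and says that $S_0[F]=S_1[F]$ holds if and only if $\left[\begin{smallmatrix}\mathbf{\Gamma_{Qual}}\\ T_F\end{smallmatrix}\right]X=\left[\begin{smallmatrix}\mathbf{1_{t,1}}\\ \mathbf{0}\end{smallmatrix}\right]$ is consistent over the binary field. I would then read consistency through orthogonality to the left null space: over $\mathbb{F}_2$ the system is consistent iff every left annihilator $(\mathbf{y},\mathbf{z})$ of $\left[\begin{smallmatrix}\mathbf{\Gamma_{Qual}}\\ T_F\end{smallmatrix}\right]$ also annihilates the right-hand side, i.e. $\mathbf{y}\cdot\mathbf{1_{t,1}}=0$. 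Writing $S=\{i:y_i=1\}$ and letting $V_S$ be the XOR of the rows of $\mathbf{\Gamma_{Qual}}$ indexed by $S$ (with bitset $\mathrm{supp}(V_S)$), the relation $\mathbf{y}\,\mathbf{\Gamma_{Qual}}+\mathbf{z}\,T_F=\mathbf{0}$ says precisely that $V_S$ vanishes off $F$, i.e. $\mathrm{supp}(V_S)\subseteq F$ (the block $\mathbf{z}$ being then determined on the coordinates of $F$), while $\mathbf{y}\cdot\mathbf{1_{t,1}}=|S|\bmod 2$. Hence security at $F$ is equivalent to: every row-subset $S$ with $\mathrm{supp}(V_S)\subseteq F$ has even cardinality.

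Quantifying over all $F\in\Gamma_{Forb}$ and using that $\Gamma_{Forb}$ is downward closed, the predicate ``$\mathrm{supp}(V_S)\subseteq F$ for some $F\in\Gamma_{Forb}$'' is the same as ``$\mathrm{supp}(V_S)\in\Gamma_{Forb}$''. Therefore existence is equivalent to: no odd-sized stack $S$ has $\mathrm{supp}(V_S)\in\Gamma_{Forb}$, which is exactly the first bullet. To land the second bullet I would show that the first condition already forces it, so that the two together characterize existence. Suppose some even stack $S$ had nonempty bitset $P_0$ with $P_0\subseteq Q_0$ for a minimal qualified $Q_0\in\Gamma^{-}$. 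Toggling that row, $S'=S\,\triangle\,\{Q_0\}$ is odd and $\mathrm{supp}(V_{S'})=Q_0\setminus P_0$, a \emph{proper} subset of the minimal qualified set $Q_0$; by minimality no qualified set lies inside it, so $Q_0\setminus P_0\in\Gamma_{Forb}$, contradicting the first condition. This closes the equivalence in both directions.

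The hard part will be the middle translation step: correctly computing the left null space of the augmented matrix $\left[\begin{smallmatrix}\mathbf{\Gamma_{Qual}}\\ T_F\end{smallmatrix}\right]$ and seeing that the $\mathbf{z}$-block freely absorbs the values of $V_S$ on the coordinates of $F$, so that the only genuine constraint is $\mathrm{supp}(V_S)\subseteq F$ whereas the parity $|S|\bmod 2$ is precisely the pairing against the all-ones right-hand side. Once this identification is in place, the remaining pieces—folding black-system consistency into the $F=\emptyset$ instance, the downward-closure collapse to the odd-stack clause, and the parity-flipping toggle that derives the even-stack clause—are all bookkeeping.
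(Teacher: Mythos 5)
Your proposal is correct, and it is genuinely different from what the paper does for this statement: the paper never proves Lemma \ref{perfect_XVCS_old} at all, but imports it from \cite{Shen2017}; the only proof it supplies nearby is that of Lemma \ref{perfect_XVCS_new}, showing the second bullet is redundant given the first. Your argument makes the lemma self-contained inside the paper's own machinery. The reduction of pixel-expansion-$1$ XVCS to the two solution-set systems with $B_0=\mathbf{0_{t,1}}$, $B_1=\mathbf{1_{t,1}}$ is sound (strictly, the step from an arbitrary such XVCS to the solution sets is the Insertion Theorem, Corollary \ref{insertion_SXVCS}, rather than Corollary \ref{better_security_condition} plus Lemma \ref{security_theorem} alone, but the substance is identical). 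The genuinely new ingredient, absent from the paper, is your left-nullspace (Fredholm) translation of Corollary \ref{security_perfect_white_pixel_SXVCS}: since $T_F$ carries an identity block on the columns indexed by $F$, the $\mathbf{z}$-part of any left annihilator is uniquely determined and absorbs exactly the coordinates inside $F$, so consistency of the augmented system is equivalent to the parity condition on row stacks $S$ with $\mathrm{supp}(V_S)\subseteq F$; downward closure of $\Gamma_{Forb}$ then collapses the quantifier over $F$ to the first bullet, with the $F=\emptyset$ instance subsuming consistency of the black system. Your closing toggle argument (an even stack with nonempty bitset inside a minimal $Q_0$ yields, after XORing with the row $\mathbf{Q_0}$, an odd stack with forbidden support) is literally the paper's proof of Lemma \ref{perfect_XVCS_new}, so you in fact establish the sharper result — existence is equivalent to the first bullet alone — and obtain the stated two-bullet lemma as a corollary. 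Two small points to tighten: folding black-system consistency into the $F=\emptyset$ security instance silently assumes $\emptyset\in\Gamma_{Forb}$ (i.e., all qualified sets are nonempty), which should be said explicitly, and the meaning of restricting matrices to zero rows deserves a sentence so that $S_0[\emptyset]=S_1[\emptyset]$ is well defined; neither affects correctness.
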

	
	\par{Here, we have to point out that this is not the simplist form and we will improve the lemma.}
	
	\begin{lemma}\label{perfect_XVCS_new}
		\par{Given an access structure $(\Gamma_{Qual},\Gamma_{Forb})$, there exists an XVCS with pixel expansion $1$ $\iff$ the result of the stack of odd numbers of rows in $\mathbf{\Gamma_{Qual}}$ is not a row in $\mathbf{\Gamma_{Forb}}$;}
	\end{lemma}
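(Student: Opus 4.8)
The plan is to obtain Lemma~\ref{perfect_XVCS_new} as a simplification of Lemma~\ref{perfect_XVCS_old}, which I take as given (it is the characterization of \cite{Shen2017}). Lemma~\ref{perfect_XVCS_old} states that an XVCS with pixel expansion $1$ exists if and only if two conditions hold: (C1) the XOR of an odd number of rows of $\mathbf{\Gamma_{Qual}}$ is never a row of $\mathbf{\Gamma_{Forb}}$; and (C2) the XOR of an even number of rows, with bitset $P_{0}$, satisfies $P_{0}=\emptyset$ or $P_{0}$ is contained in no minimal qualified set. Since the single criterion of Lemma~\ref{perfect_XVCS_new} is precisely (C1), it suffices to prove that (C1) already forces (C2); then the conjunction of (C1) and (C2) is logically equivalent to (C1) alone, and Lemma~\ref{perfect_XVCS_new} follows by transitivity.

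I would argue that (C1) implies (C2) by contradiction. Suppose (C1) holds but (C2) fails, so there is an even stack of rows of $\mathbf{\Gamma_{Qual}}$ whose bitset $P_{0}$ is nonempty and satisfies $P_{0}\subseteq Q_{0}$ for some minimal qualified set $Q_{0}\in\Gamma^{-}$. The key move is to XOR this even stack with the single row $Q_{0}$ (which is a row of $\mathbf{\Gamma_{Qual}}$ since $\Gamma^{-}\subseteq\Gamma_{Qual}$), producing a stack of an \emph{odd} number of rows. Because $P_{0}\subseteq Q_{0}$, this new stack has bitset $Q_{0}\oplus P_{0}=Q_{0}\setminus P_{0}$, a proper subset of $Q_{0}$ since $P_{0}\neq\emptyset$. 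Applying (C1) to this odd stack, its bitset is not a row of $\mathbf{\Gamma_{Forb}}$, i.e.\ it contains some minimal qualified set $Q_{1}\in\Gamma^{-}$ with $Q_{1}\subseteq Q_{0}\setminus P_{0}\subsetneq Q_{0}$. This gives $Q_{1}\subsetneq Q_{0}$ with both in $\Gamma^{-}$, contradicting the minimality of $Q_{0}$. Hence (C2) must hold.

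The substance here is careful bookkeeping rather than a deep idea, and I would be explicit about three points. First, the parity step: adjoining one row converts an even stack into an odd one, so it is exactly (C1) that becomes applicable. Second, the set identity $Q_{0}\oplus P_{0}=Q_{0}\setminus P_{0}$ depends on the inclusion $P_{0}\subseteq Q_{0}$, which I would record before using it. Third, the degenerate case $Q_{0}\setminus P_{0}=\emptyset$ is handled automatically: the empty bitset lies in $\Gamma_{Forb}$ by the definition $\Gamma_{Forb}=\{F:\forall Q\in\Gamma^{-},\ Q\nsubseteq F\}$, so ``not a row of $\mathbf{\Gamma_{Forb}}$'' already forces the odd stack to contain a nonempty $Q_{1}\in\Gamma^{-}$, excluding this case. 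The only genuine subtlety to flag is the reading of the two matrices: a bitset fails to be a row of $\mathbf{\Gamma_{Forb}}$ exactly when it contains some minimal qualified set (immediate from the definition of $\Gamma_{Forb}$), and every $Q_{0}\in\Gamma^{-}$ is available as a row of $\mathbf{\Gamma_{Qual}}$. These observations are what make the even-to-odd trick close the argument.
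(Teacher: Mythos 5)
Your proposal is correct and takes essentially the same route as the paper's own proof: both reduce Lemma~\ref{perfect_XVCS_new} to Lemma~\ref{perfect_XVCS_old} by showing the odd-stack condition alone forces the even-stack condition, via XORing the offending even stack (bitset $P_{0}\subseteq Q_{0}$, $P_{0}\neq\emptyset$) with the row $\mathbf{Q_{0}}$ to produce an odd stack whose bitset $Q_{0}\setminus P_{0}$ is a proper subset of $Q_{0}$ and hence forbidden, contradicting the odd-stack condition. If anything, your write-up is slightly more careful than the paper's, which leaves implicit both the degenerate case $P_{0}=Q_{0}$ and the unpacking of ``not a row of $\mathbf{\Gamma_{Forb}}$'' as ``contains some minimal qualified set.''
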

	\begin{proof}
		\par{Suppose that the second point in lemma \ref{perfect_XVCS_old} is not true. Then there exists even numbers of rows of $\mathbf{\Gamma_{Qual}}$ of which the corresponding bitset of the result of the stack(denoted by $A$) is a none empty subset of $Q\in \Gamma^{-}$. Then define $\mathbf{B}:=\mathbf{A}+\mathbf{Q}$. So $B$ is a proper subset of $Q$ since $\emptyset\neq A\subseteq Q$. However, since $\mathbf{A}$ is the stack result of even numbers of rows in $\mathbf{\Gamma_{Qual}}$, $\mathbf{B}$ should be the stack result of odd numbers of rows in $\mathbf{\Gamma_{Qual}}$. That means $B\notin \Gamma_{Forb}$. This contradicts with $B\subsetneqq Q\in \Gamma^{-}$.}
	\end{proof}

	\section{Application: The Optimal $(2,n)$-XVCS}\label{s5}
	\subsection{Three optimal Parameters}
	\par{Unlike $(k,k)$-XVCS or other XVCS with pixel expansion $1$, $(2,n)$-XVCS cannot just be described as absolutely "perfect" or absolutely "optimal". The "optimal" here is defined as the optimal pixel expansion, average contrast, and noise of the reconstructed picture among which the pixel expansion and the average contrast are the most important parameters to evaluate an XVCS and the noise of the reconstructed picture is mentioned as we happened to discover it. However, if you really do not care about these three important parameters, you can define other parameters to evaluate an XVCS(although that is hard to imagine). But even if you did that, this section still helps a lot.}
	
	\par{In previous research, the researchers may find the optimal pixel expansion of $(2,n)$-XVCS or the optimal average contrast of it but they cannot reach both of them at the same time, not to mention ensure that it is an SXVCS! }
	
	\par{Thanks to the preliminary study of SXVCS and SemiSXVCS, we can transform the study of general $(2,n)$-XVCS into $(2,n)$-SXVCS.}
	
	\subsection{The Algorithm}
	\par{Here we will show the algorithm for constructing an optimal $(2,n)$-XVCS. It is a simple algorithm with low complexity(no more than $O(n)$). We will show the proof of it later.}
	\par{
		\begin{algorithm}[htb]
			\caption{The Algorithm of Constructing an Optimal $(2,n)$-XVCS.}
			\label{optimal_(2,n)_XVCS}
			\begin{algorithmic}[1]
				\Require
				The number $n$ in $(2,n)$-XVCS;
				\Ensure The basis matrices of optimal $(2,n)$-XVCS;
				\State Initially matrix $B_{1}^{*}=[1]$, $m=1$, $i=2$;
				\State Let $i=i+1$;\label{Step_2}
				\State If $i=2^{k}+1$ for some $k=0,1,2,3,\cdots$, let $B_{1}^{*}=\left[\begin{array}{cc}
					B_{1}^{*}&	\mathbf{1_{i-1,1}}\\
					\mathbf{0_{1,m}}&	1
				\end{array}\right]$ and $m=m+1$;
				
				\State Else, $B_{1}^{*}=\left[\begin{array}{c}
					B_{1}^{*}\\
					B_{1}^{*}[\{2^{m}-i+1\}]
				\end{array}\right]$;
				
				\State If $i\neq n$, go back to step 2;
				
				\State Solve the $2$ linear systems 
				$$\left[\begin{array}{ccccc}
					1&1&&&\\
					&1&1&&\\
					&&\ddots&\ddots&\\
					&&&1&1\\
				\end{array}\right]_{n-1,n}X=\mathbf{0_{n-1,m}};$$
				$$\left[\begin{array}{ccccc}
					1&1&&&\\
					&1&1&&\\
					&&\ddots&\ddots&\\
					&&&1&1\\
				\end{array}\right]_{n-1,n}X=B_{1}^{*};$$
				
				\Return The $2$ solution sets;
			\end{algorithmic}
		\end{algorithm}
	}	
	\par{In fact, not only can we construct an optimal $(2,n)$-XVCS in $O(n)$ but we can also construct $n-1$ optimal $(2,k)$-XVCSs for all $k=2,3,\cdots,n$ in at most $O(n^{2})$ since $B_{1}^{*}$ in the algorithm is unique for a fixed $n$ and we have already gotten all $B_{1}^{*}$ for $k=2,3,\cdots,n$ on the way of  constructing $B_{1}^{*}$ for $n$ and the only thing left for us to do is to solve the $2$ linear systems determined by different $B_{1}^{*}$ which is expected to be finished in $O(k)$ since the coefficient matrix is written in a form that is easy to solve. }
	
	\par{To prove the algorithm, we have to introduce $(2,n)$-PXVCS first, then calculate the maximum contrast of it which solves the main problem in $(2,n)$-PXVCS, prove that PXVCS and SXVCS have the same maximum contrast and at last calculate the maximum contrast of $(2,n)$-SXVCS and finish the proof.}

	\subsection{Optimal $(2,n)$-PXVCS}
	\par{Probability-based XVCS(PXVCS) is mentioned here. Although $(2,n)$-PXVCS is not a core concept of this paper, it helps to simplify the proof. }
		
		\begin{definition}\label{PXVCS}
			\par{Given an access structure $\Gamma=(\Gamma_{Qual},\Gamma_{Forb})$ with the same notations above. If two (finite) collections of $n\times m$ Boolean matrices $C_{0}$,$C_{1}$ satisfy the following two conditions:
				\begin{center}
					\begin{itemize}
						\item \textbf{Contrast Condition:} $\forall Q\in \Gamma_{Qual}$, $\displaystyle{\dfrac{1}{|C_{0}|}\sum_{M_{0}\in C_{0}}\omega(\oplus M_{0}[Q])<\dfrac{1}{|C_{1}|}\sum_{M_{1}\in C_{1}}\omega(\oplus M_{1}[Q])}$
						
						\item \textbf{Security Condition:} $\forall F=\{i_{1},i_{2},\cdots,i_{p}\}\in \Gamma_{Forb}$, the two collections of $p\times n$ matrices $D_{0}$ and $D_{1}$ obtained by restricting $C_{0}$ and $C_{1}$ to rows $i_{1},i_{2},\cdots,i_{p}$, respectively, denoted by $D_{0}=C_{0}[F]$ and $D_{1}=C_{1}[F]$, are distinguishable in the sense that they contain the same matrices with the same frequencies.
					\end{itemize}
				\end{center}
				then we say $C_{0}$ and $C_{1}$ construct a PXVCS on $\Gamma$. 
			}
		\end{definition}
		
		\par{Similar to general XVCS, we can insert PXVCS into the solution sets of $2k$ linear systems.}
		
		\begin{lemma}
			All PXVCS can be inserted into a PXVCS constructed from \eqref{w_eqs} and \eqref{b_eqs}.
		\end{lemma}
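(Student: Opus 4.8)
The plan is to mirror the proof of the XVCS Insertion Theorem (Corollary \ref{insertion_theorem_XVCS}), replacing the pointwise contrast check by the \emph{averaged} contrast condition of Definition \ref{PXVCS}. First I would normalize the given PXVCS: by Corollary \ref{better_security_condition} I may assume $|C_{0}|=|C_{1}|=k$ and write $C_{0}=\{X_{01},X_{02},\cdots,X_{0k}\}$, $C_{1}=\{X_{11},X_{12},\cdots,X_{1k}\}$. The replication used in Corollary \ref{better_security_condition} only multiplies each collection by a constant, which leaves every average $\frac{1}{|C_{i}|}\sum_{M\in C_{i}}\omega(\oplus M[Q])$ unchanged, so the PXVCS contrast condition is preserved under this normalization.

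Next I would set $B_{ij}:=\mathbf{\Gamma_{Qual}}X_{ij}$ for $i=0,1$ and $j=1,2,\cdots,k$, and form the $2k$ systems $\mathbf{\Gamma_{Qual}}X=B_{ij}$ exactly as in \eqref{w_eqs} and \eqref{b_eqs}. Each system is consistent (it admits $X_{ij}$ as a particular solution), so by Lemma \ref{lem01} its solution set is $S_{ij}=S_{w}+X_{ij}$. Writing $S_{i}=\coprod_{j=1}^{k}S_{ij}$, I would verify the security condition precisely as in Corollary \ref{insertion_theorem_XVCS}: since $C_{0}[F]=C_{1}[F]$ as multisets for every $F\in\Gamma_{Forb}$, the identity bijection on $\{1,\cdots,k\}$ witnesses the hypothesis of Lemma \ref{security_theorem_complete}, yielding $S_{0}[F]=S_{1}[F]$.

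The only place where PXVCS genuinely differs from the XVCS argument is the contrast condition, and this is the step I expect to require the most care. The subtlety is that each solution set $S_{ij}$ is in general far larger than the single matrix $X_{ij}$ (it is the entire coset $S_{w}+X_{ij}$), so a priori the average of $\omega(\oplus M[Q])$ over the multiset $S_{i}$ need not coincide with the average over $C_{i}$. The resolution is that for a qualified set $Q$ corresponding to row $s$ of $\mathbf{\Gamma_{Qual}}$, every $M\in S_{ij}$ satisfies $\oplus(M[Q])=B_{ij}[\{s\}]=\oplus(X_{ij}[Q])$; that is, the stacked value is constant across the whole coset. Because all the cosets $S_{ij}$ share the common cardinality $|S_{w}|$, selecting $M\in S_{i}$ uniformly induces exactly the distribution obtained by selecting $j$ uniformly and reading off $\omega(\oplus X_{ij}[Q])$, so
$$\frac{1}{|S_{i}|}\sum_{M\in S_{i}}\omega(\oplus M[Q])=\frac{1}{k}\sum_{j=1}^{k}\omega(\oplus X_{ij}[Q])=\frac{1}{|C_{i}|}\sum_{M\in C_{i}}\omega(\oplus M[Q]).$$
Hence the averaged inequality of Definition \ref{PXVCS} transfers verbatim from $(C_{0},C_{1})$ to $(S_{0},S_{1})$, and the two schemes induce the same distribution of reconstructed values on every $Q\in\Gamma_{Qual}$. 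Therefore $S_{0}$ and $S_{1}$ form a PXVCS constructed from \eqref{w_eqs} and \eqref{b_eqs} with the same effect on the reconstructed picture, which is exactly the claimed insertion.
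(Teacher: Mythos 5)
Your proposal is correct and is essentially the argument the paper intends: the paper omits this proof entirely, remarking only that it is ``almost same with the proof given before'' (i.e., Corollary \ref{insertion_theorem_XVCS}), and your write-up supplies exactly that argument, including the key observation that $\oplus(M[Q])$ is constant on each coset $S_{ij}=S_{w}+X_{ij}$ and that all cosets share cardinality $|S_{w}|$, so the averaged contrast transfers from $C_{i}$ to $S_{i}$. One small correction: the multiset equality $C_{0}[F]=C_{1}[F]$ guarantees \emph{some} bijection $f_{F}$ pairing each $X_{0j}$ with a matching $X_{1f_{F}(j)}$, not necessarily the identity (the matrices may appear in different orders in the two collections), but since Lemma \ref{security_theorem_complete} only requires the existence of a bijection, this does not affect the validity of your proof.
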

		
		\par{Also similarly, the $2k$ systems construct a PXVCS if and only if it fits the security condition in lemma \ref{security_theorem_complete} and fits the contrast condition $\displaystyle{\sum_{i=1}^{k}\sum_{s=1}^{t}\omega(B_{0i}[\{s\}])<\sum_{i=1}^{k}\sum_{s=1}^{t}\omega(B_{1i}[\{s\}])}$. Since it is almost same with the proof given before, we just skip it.}
		
		\par{However, we will show some new thoughts here.}
		
		\subsubsection{The Optimal Contrast for $(2,n)$-PXVCS with Pixel Expansion $1$}
		\par{If we restrict the problem to pixel expansion $1$ and with the perfect white pixel reconstruction, the problem will be much easier to solve by probability theory.}
		
		\par{Now suppose that the probability of choosing $\mathbf{0_{n,1}}\in C_{0}$ while we are applying a $(2,n)$-PXVCS is $p$ while the probability of choosing $\mathbf{1_{n,1}}\in C_{0}$ is $1-p$. For matrices in $C_{0}$, we suppose that it looks like:
			$$\left[\begin{array}{c}
				\xi_{1}\\
				\xi_{2}\\
				\vdots\\
				\xi_{n}
			\end{array}\right],$$ where $\xi_{i}$ obeys Bernoulli distribution with $P(\xi_{i}=0)=p$ and $P(\xi_{i}=1)=1-p$. Take attention that they can be not independent of each other(in fact, if independent, the maximal contrast will be $\frac{1}{2}$ \cite{2007Two}). The average contrast is defined by $\alpha=\dfrac{\displaystyle{\sum_{1\leq i<j\leq n}E(\xi_{i}\oplus \xi_{j})}}{C_{n}^{2}}$.  We denote that $P_{r_{1}r_{2} \cdots r_{n}}:=P(\xi_{1}=r_{1},\xi_{2}=r_{2}, \cdots ,\xi_{n}=r_{n}),r_{i}\in\{0,1\}$.
		}
		
		\par{Then we calculate the upper boundary of $\alpha$.}
		\begin{eqnarray*}
			\alpha&&=\dfrac{\displaystyle{\sum_{1\leq i<j\leq n}E(\xi_{i}\oplus \xi_{j})}}{C_{n}^{2}}\label{1}\\
			&&=\dfrac{
				\displaystyle{
					\sum_{1\leq i<j\leq n}
					\sum_{r_{i}\oplus r_{j}=1}P_{r_{1}r_{2} \cdots r_{n}}
				}
			}
			{
				C_{n}^{2}
			}\label{2}\\
			&&=\dfrac{
				\displaystyle{
					\sum_{r_{1}r_{2}..r_{n}}\omega(r_{1}r_{2}..r_{n})[n-\omega(r_{1}r_{2}..r_{n})]P_{r_{1}r_{2} \cdots r_{n}}
				}
			}
			{
				C_{n}^{2}
			}\label{3}\\
			&&\leq \dfrac{
				\displaystyle{
					[\frac{n}{2}][\frac{n+1}{2}]\sum_{r_{1}r_{2}..r_{n}}P_{r_{1}r_{2} \cdots r_{n}}
				}
			}
			{
				C_{n}^{2}
			}\label{4}\\
			&&=\dfrac{
				[\frac{n}{2}][\frac{n+1}{2}]
			}
			{
				C_{n}^{2}
			}\label{5}
		\end{eqnarray*}
		\par{The binary code $r_{1}r_{2}\cdots r_{n}$ is seemed as a binary vector in $\omega(r_{1}r_{2}\cdots r_{n})$. It is easy to check that the $"="$ can be reached in the above proof.}
		
		\par{No wonder the maximal contrast of $(2,n)$-PXVCS is just $\dfrac{[\frac{n}{2}][\frac{n+1}{2}]}{C_{n}^{2}}$.}
		
		\subsubsection{PXVCS and SXVCS}
		\par{Here we just have to know that the optimal average contrast of  $(2,n)$-SXVCS is no larger than $(2,n)$-PXVCS. In fact, they are equal.}
		
		\begin{lemma}
			Given SXVCS, there exists PXVCS with average contrast equal to the given SXVCS.
		\end{lemma}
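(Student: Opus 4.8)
The plan is to build the required PXVCS from the given SXVCS by a column-splitting construction and then verify that neither the security condition nor the per-set contrast is disturbed. First I would invoke Corollary \ref{better_security_condition} to assume, without loss of generality, that the given SXVCS $(C_0,C_1)$ already satisfies the stronger security condition $C_0[F]=C_1[F]$ for every $F\in\Gamma_{Forb}$; this replacement only duplicates existing matrices, so the reconstructions $E_0(Q),E_1(Q)$ are unchanged and the scheme stays static. Write $m$ for the pixel expansion and, for $i=0,1$, let $C_i'$ be the collection of all $n\times 1$ matrices obtained by listing, with multiplicity, every column of every matrix in $C_i$. This is a candidate PXVCS of pixel expansion $1$ with $|C_i'|=m|C_i|$, and the splitting is the one device that turns the deterministic multi-pixel reconstruction into a single-pixel probabilistic one.

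Next I would check the security condition for $(C_0',C_1')$. Since $C_0[F]=C_1[F]$ as multisets of $|F|\times m$ matrices, there is a bijection matching each $M_0\in C_0$ with some $M_1\in C_1$ agreeing on the rows of $F$; splitting matched matrices into their columns produces the same multiset of $|F|\times 1$ column vectors on both sides, so $C_0'[F]=C_1'[F]$ and the security condition of Definition \ref{PXVCS} holds.

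The heart of the argument is the contrast computation. For each $Q\in\Gamma_{Qual}$ and each $M_i\in C_i$, the weight of the $m$-pixel reconstruction decomposes columnwise as $\omega(\oplus M_i[Q])=\sum_{c=1}^{m}\omega(\oplus M_i^{(c)}[Q])$, where $M_i^{(c)}$ denotes the $c$-th column; summing over $C_i$ gives $\sum_{v\in C_i'}\omega(\oplus v[Q])=\sum_{M_i\in C_i}\omega(\oplus M_i[Q])$. Dividing by $|C_i'|=m|C_i|$ and using that the scheme is static (so $\oplus M_i[Q]=E_i(Q)$ for every $M_i\in C_i$) collapses the average to
$$\frac{1}{|C_i'|}\sum_{v\in C_i'}\omega(\oplus v[Q])=\frac{1}{m|C_i|}\cdot|C_i|\,\omega(E_i(Q))=\frac{\omega(E_i(Q))}{m}.$$
Because $\omega(E_0(Q))<\omega(E_1(Q))$, the strict PXVCS contrast inequality holds, and since the new pixel expansion is $1$, the contrast of $Q$ equals $\omega(E_1(Q))/m-\omega(E_0(Q))/m$, which is exactly $\alpha(Q)$ for the original SXVCS. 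Averaging over $\Gamma_{Qual}$ yields equal average contrast.

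The step I expect to require the most care is this contrast bookkeeping: one must ensure that the single normalization by the enlarged collection size $m|C_i|$ simultaneously absorbs the pixel expansion $m$ of the source SXVCS and the $m$-fold growth in the number of matrices caused by the split, so that the per-pixel averages land precisely on $\omega(E_i(Q))/m$. Everything else is routine multiset bookkeeping. I would finally remark that, even without splitting, an SXVCS already satisfies Definition \ref{PXVCS} verbatim; the column-splitting is needed only to additionally force pixel expansion $1$, which is what makes the resulting average contrast directly comparable with the optimal pixel-expansion-$1$ $(2,n)$-PXVCS bound computed above.
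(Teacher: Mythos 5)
Your proof is correct and is essentially the paper's own argument: the paper performs the identical column-splitting, writing $B_{i}=[B_{i1},B_{i2},\cdots,B_{im}]$ and feeding the $2m$ systems $\mathbf{\Gamma_{Qual}}X=B_{ij}$ (with $n\times 1$ unknowns, hence pixel expansion $1$) into the framework of \eqref{w_eqs} and \eqref{b_eqs}, which is precisely the linear-system counterpart of your splitting of each basis matrix into its columns, since the solutions of $\mathbf{\Gamma_{Qual}}X=B_{ij}$ are exactly the $j$-th columns of the solutions of $\mathbf{\Gamma_{Qual}}X=B_{i}$. Your explicit verification of security via Corollary \ref{better_security_condition} and the columnwise weight bookkeeping showing the average contrast lands on $\bigl(\omega(E_{1}(Q))-\omega(E_{0}(Q))\bigr)/m$ simply spells out what the paper dismisses as ``easy to check.''
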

		\begin{proof}
			\par{Just consider the systems \eqref{white_SXVCS} and \eqref{black_SXVCS}. Suppose that $B_{i}=[B_{i1},B_{i2},\cdots.B_{im}](i=0,1)$ where $m$ is the pixel expansion of the given SXVCS. Substitute these constant matrices into \eqref{w_eqs} and \eqref{b_eqs}. It is easy to check that the security condition and contrast condition of PXVCS are satisfied. Also, the average contrast of this PXVCS is equal to the given SXVCS.}
		\end{proof}
		\par{As a result, we have gotten the upper boundary of $(2,n)$-SXVCS and by corollary \ref{OPTIMAL} and \ref{general_security_theorem_mat_SXVCS}, we can easily conclude that it is also the upper boundary of $(2,n)$-XVCS! }
		
		\subsection{The Proof}
		\par{It is already known that the minimal pixel expansion of $(2,n)$-XVCS is $\lceil\log_{2}n\rceil$ \cite{Tuyls2005}. It is also easy to check that in SXVCS but we skip that for convenience. Here just check that the scheme we gave before has the optimal pixel expansion.}
		
		\par{The noise of our scheme is no wonder zero as well.}
		\par{Thus, the only thing to check is that the average contrast of our construction is optimal(i.e. $\dfrac{[\frac{n}{2}][\frac{n+1}{2}]}{C_{n}^{2}}$).}
		
		\begin{proof}
			\par{\begin{itemize}
					\item The contrast condition of all elements of qualified set of $(2,n)$ can be described by the stack result of continuous rows of $B_{1}^{*}$ we get(i.e. $[0,0,\cdots,0,\stackrel{l\text{th}}{1},0,\cdots,0,\stackrel{h\text{th}}{1},0,\cdots,0]X=\oplus(B_{1}^{*}[\{l,l+1,\cdots,h\}])$). Notice that $B_{1}^{*}=\left[\begin{array}{c}
						B_{1}^{*}\\
						B_{1}^{*}[\{2^{m}-i+1\}]
					\end{array}\right]$ is in fact adding a row by symmetry of $B_{1}^{*}[\{2^{m}-i+1\}]$ by $B_{1}^{*}[\{2^{m-1}\}]$, it is easy to check that each $\oplus(B_{1}^{*}[\{l,l+1,\cdots,h\}])$ is not zero by mathematical induction. This indicates that the construction is an SXVCS.
					\item We just restrict the proof to a column of $B_{0}^{*}$, for example, the first column, denoted by $B_{01}^{*}$. We just have to check that $\displaystyle{\sum_{i=1}^{n-1}\sum_{j=i}^{n-1}\oplus(B_{01}^{*}[\{i,i+1,\cdots,j\}])=[\frac{n}{2}][\frac{n+1}{2}]}$. By mathematical induction, it is easy to get the result.
			\end{itemize}}
		\end{proof}
		\par{Thus we have already proved that our construction gives an optimal $(2,n)$-XVCS(i.e. a $(2,n)$-SXVCS with pixel expansion $\lceil\log_{2}n\rceil$ and average contrast $\dfrac{[\frac{n}{2}][\frac{n+1}{2}]}{C_{n}^{2}}$).}
		
\section{Experiments and Comparison}
In this section, we presents some experiments and comparisons related to SXVCS.

\subsection{Some Access Structures with SXVCS}
Table \ref{Feasibility in Constructing SXVCS} displays the existence of SXVCS on several types of Access Structures. 

Here, we say an access structure satisfying the conditions of Theorem \ref{perfect_XVCS_new} is a perfect structure\label{perfect structure}. 
Also, given the access structure $\Gamma$ with qualified matrix $\mathbf{\Gamma_{Qual}}$, for all $i=1,2,\cdots,m$, replace the $1$ in the $i$-th column of the matrix with a row vector of length $m_{i}$ consisting of all $1$s, and replace the $0$ with a row vector of length $m_{i}$ consisting of all $0$s. This results in a new matrix, which, as the access structure corresponding to the qualified matrix, is called derived from $\Gamma$\label{derive}.

It is necessary to underline that most access structures cannot construct SXVCS on them. The table below just shows that SXVCS is not that rare instead of showing that it is common to see access structures with SXVCS.

\begin{table}[h] 
\centering
\caption{Access Structures and Their Feasibility in Constructing SXVCS.} 
\label{Feasibility in Constructing SXVCS}
\begin{tabular}{cc}
\toprule
  Access structure & Can Construct SXVCS \\
\hline
$(2,n)(n\geq 2)$ & Yes   \\ 
$(n,n)$ & Yes  \\ 
$(k,n)(n>k>2)$ & No \\
perfect structure\ref{perfect structure} & Yes \\
access structures derived from other access structures with SXVCS\ref{derive} & Yes\\ \hline
\end{tabular}
\end{table}


\subsection{Experiments}
In this section, we show the secret image, the shares and the reconstructed images of SXVCS and SemiSXVCS. We take $(2,3)$-SXVCS and PW-$(2,3)$-XVCS as an example.

Let's begin by examining a $(2,3)$-SXVCS instance. The secret image, the shares and the reconstructed images are presented in Figure \ref{fig:example}. It is easy to check that there is no noise in all the reconstructed images. In this instance, it appears that the black regions of the Secret Image (SI) are depicted by vertical black and white lines in the reconstructed images, while the white pixels remain white.

	\begin{figure}[h]
		\centering
        \subfloat[Secret Image]{\includegraphics[width=1in]{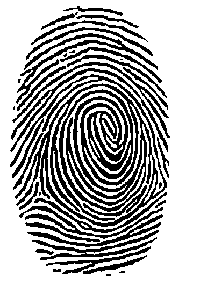}\label{fig:si}} \hspace{0.03in}
		\subfloat[SXVCS1]{\includegraphics[width=2in]{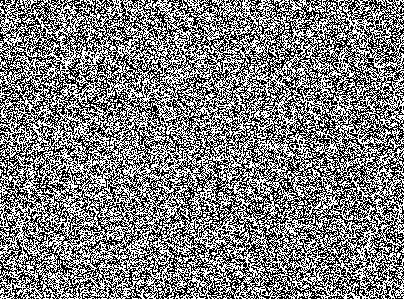}\label{fig:sx1}}\hspace{0.03in}
		\subfloat[SXVCS2]{\includegraphics[width=2in]{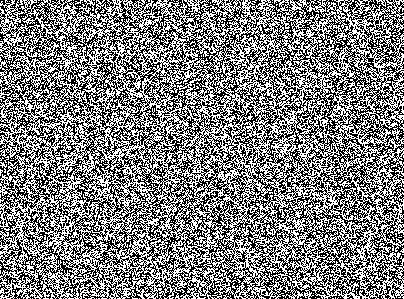}\label{fig:sx2}}\\
		\subfloat[SXVCS3]{\includegraphics[width=2in]{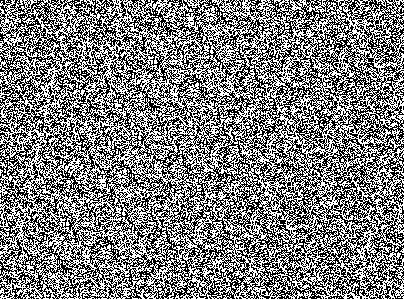}\label{fig:sx3}}\hspace{0.03in}
		\subfloat[SXVCS2+3]{\includegraphics[width=2in]{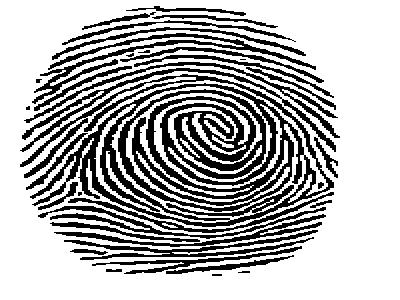}\label{fig:sx23}} \hspace{0.03in}
		\subfloat[SXVCS1+2]{\includegraphics[width=2in]{"figures//SXVCS_1+2.png"}\label{fig:sx12}} \hspace{0.03in}
		\subfloat[SXVCS1+3]{\includegraphics[width=2in]{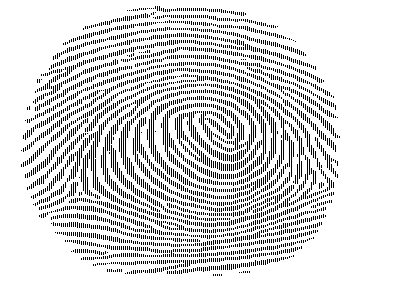}\label{fig:sx13}}
		\caption{An experiment of $(2,3)$-SXVCS}
		\label{fig:example}
	\end{figure}
	

Next, let's examine a PW-$(2,n)$-XVCS that does not conform to the SXVCS structure in Figure \ref{fig:PW-XVCS1 example}. We omit the Secret Image (SI) and shares in this instance as they bear similarities to those in the (2,3)-SXVCS. It's evident that the reconstructed image is permeated with noise, even though the Secret Image (SI) can still be discerned from it. Consequently, PW-SXVCS provides a clearer representation of the details compared to a PW-XVCS that does not embody the SXVCS structure.

\begin{figure}[h]
		\centering
        \subfloat[PW-XVCS1+2]{\includegraphics[width=2in]{"figures//SemiSXVCS_1.png"}\label{fig:si}} \hspace{0.03in}\hspace{0.03in}
		\subfloat[PW-XVCS1+3]{\includegraphics[width=2in]{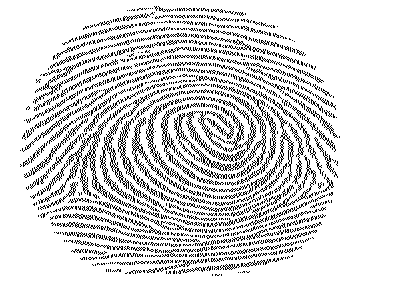}\label{fig:sx1}}\hspace{0.03in}\hspace{0.03in}
		\subfloat[PW-XVCS2+3]{\includegraphics[width=2in]{"figures//SXVCS_2+3.png"}\label{fig:sx2}}\hspace{0.03in}
		\caption{An experiment of $(2,3)$-PW-XVCS(SemiSXVCS)}
		\label{fig:PW-XVCS1 example}
	\end{figure}

\subsection{Comparisons for $(2,n)$-XVCS}
 Here we first list Table \ref{Several $(2,n)$-XVCS} to compare our $(2,n)$-XVCS and $(2,n)$-PXVCS with pixel expansion $1$ to other researchers' work.

 The "optimal $(2,n)$-XVCS with random row permutation" in the table is constructed from the optimal $(2,n)$-XVCS we mentioned before. Given the basic matrices $S_{0}$ and $S_{1}$ of the optimal $(2,n)$-XVCS we mentioned before, then let $C_{i}$ collects the random row permutation of matrices in $S_{i}(i=0,1)$. It is easy to check that $C_{0}$ and $C_{1}$ construct a new XVCS with parameters listed in the table.

\begin{table}[h]
\centering
\caption{The performance comparison of several $(2,n)$-XVCS} 
\label{Several $(2,n)$-XVCS}
\begin{tabular}{cccccc}
\toprule
  Schemes & Pixel & Average & Minimal  & Noise  & Is PXVCS\\
    & Expansion & Contrast& Contrast &   & \\
\hline
Our optimal & $\lceil\log_{2}n\rceil$ & $\dfrac{[\frac{n}{2}][\frac{n+1}{2}]}{C_{n}^{2}}$ & $\frac{1}{\lceil\log_{2}n\rceil}$ & No & No\\ 


Our $(2,n)$-PXVCS & $1$ & $\dfrac{[\frac{n}{2}][\frac{n+1}{2}]}{C_{n}^{2}}$& -  & Large & Yes \\

Our optimal (random row) &$\lceil\log_{2}n\rceil$ & $\dfrac{[\frac{n}{2}][\frac{n+1}{2}]}{C_{n}^{2}}$ & $\dfrac{[\frac{n}{2}][\frac{n+1}{2}]}{C_{n}^{2}}$  & Large & No\\ 

Tuyls\cite{Tuyls2005} & $\lceil\log_{2}n\rceil$ & $\geq\frac{1}{\lceil\log_{2}n\rceil}$* & $\geq\frac{1}{\lceil\log_{2}n\rceil}$  & Large & No \\

Fu-Yu\cite{fu2014optimal} & $\lceil\log_{2}n\rceil$ & $\geq\frac{1}{\lceil\log_{2}n\rceil}$**&$\geq\frac{1}{\lceil\log_{2}n\rceil}$  & Large & No \\

Daoshun Wang et al.\cite{2007Two} & $1$  & $\frac{1}{2}$& -  & Large & Yes\\

\hline


\end{tabular}
\begin{minipage}{\textwidth}
\vspace{0.5em}
\footnotesize{
*: Here we just show his $(2,n)$-XVCS with minimun pixel expansion and other schemes are ignored. The definition of contrast in \cite{Tuyls2005} can be seen aes $\displaystyle\min_{Q\in \Gamma_{Qual}}\{\alpha(Q)\}$ in our paper. 

**: Thier scheme relies on the other schemes, and the average contrast has not calculated yet.
}
\end{minipage}
\end{table}

\begin{figure}[h]
		\centering
        \subfloat[Secret Image]{\includegraphics[width=0.5in]{"figures//SI.png"}\label{fig:si}} \hspace{0.03in}
		\subfloat[SXVCS1]{\includegraphics[width=1in]{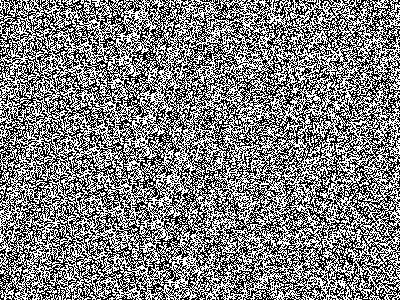}\label{fig:sx1}}\hspace{0.03in}
		\subfloat[SXVCS2]{\includegraphics[width=1in]{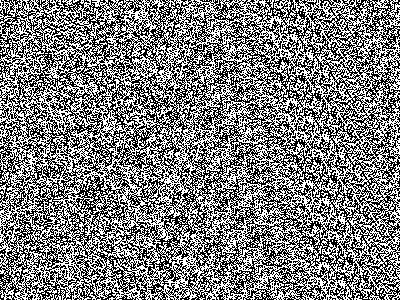}\label{fig:sx2}}\hspace{0.03in}
		\subfloat[SXVCS3]{\includegraphics[width=1in]{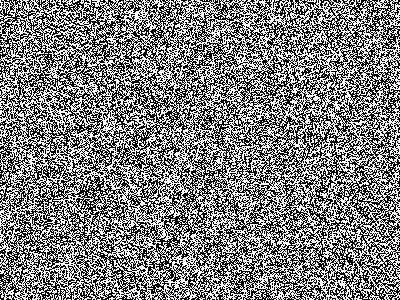}\label{fig:sx3}}\\
		\subfloat[SXVCS4]{\includegraphics[width=1in]{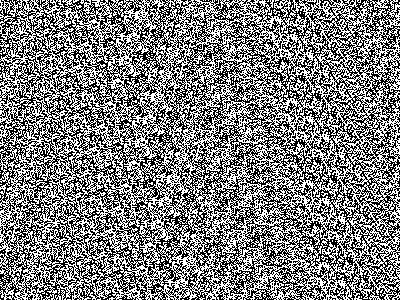}\label{fig:sx23}} \hspace{0.03in}
		\subfloat[SXVCS1+2]{\includegraphics[width=1in]{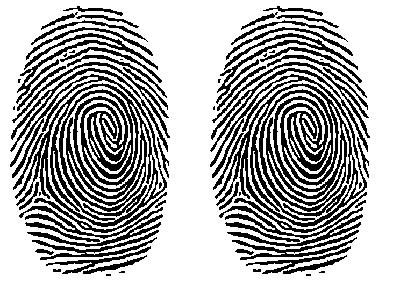}\label{fig:sx12}} \hspace{0.03in}
		\subfloat[SXVCS1+3]{\includegraphics[width=1in]{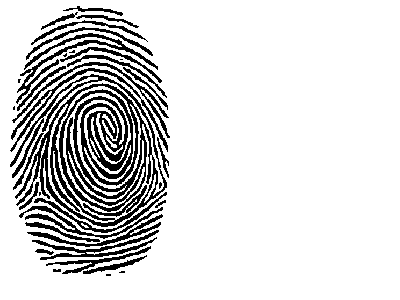}\label{fig:sx13}}\hspace{0.03in}
            \subfloat[SXVCS1+4]{\includegraphics[width=1in]{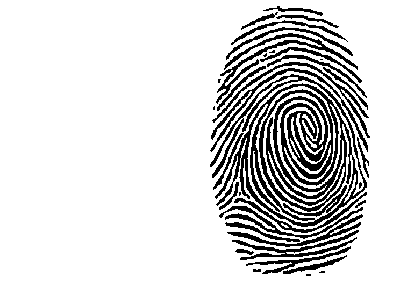}\label{fig:sx3}}\\
            \subfloat[SXVCS2+3]{\includegraphics[width=1in]{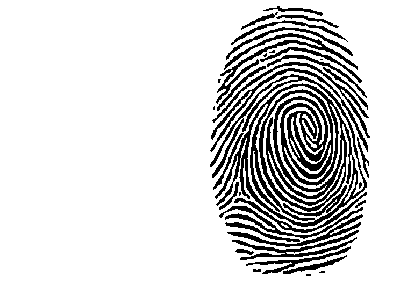}\label{fig:sx3}}\hspace{0.03in}
            \subfloat[SXVCS2+4]{\includegraphics[width=1in]{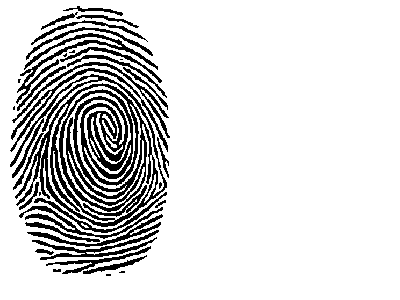}\label{fig:sx3}}\hspace{0.03in}
            \subfloat[SXVCS3+4]{\includegraphics[width=1in]{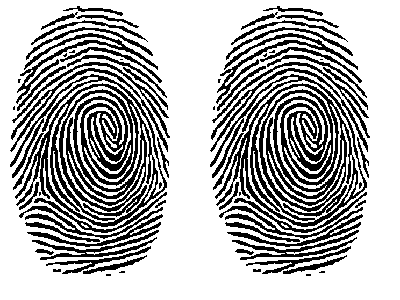}\label{fig:sx3}}\\
            \subfloat[$(2,n)$-XVCS region by region scheme in \cite{Shen2017} ]{\includegraphics[width=4in]{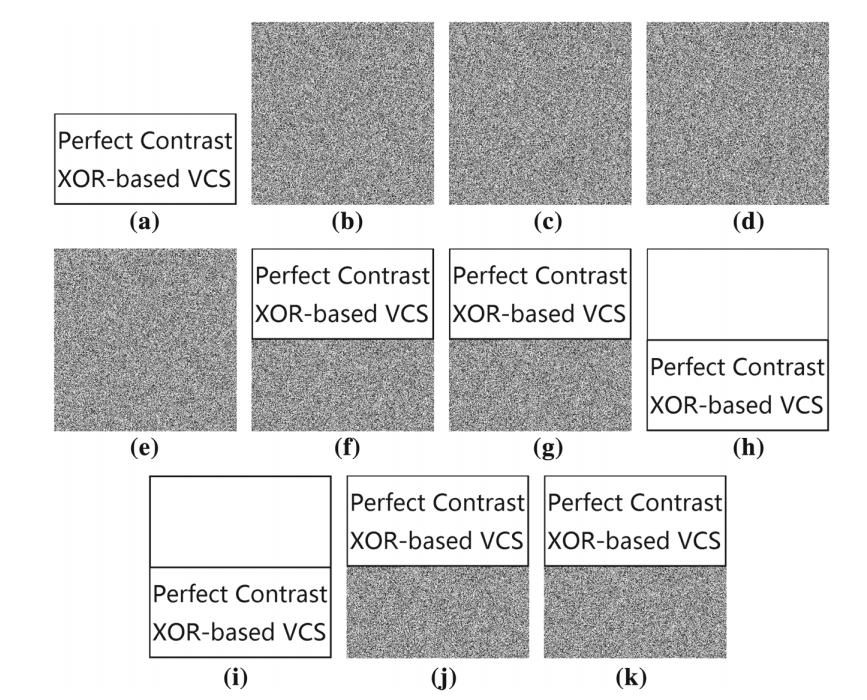}\label{fig:sx3}}
		\caption{Comparison between our $(2,n)$-SXVCS and \cite{Shen2017}'s XVCS region by region}
		\label{SemiSXVCS and SXVCS region}
	\end{figure}

As to the scheme in \cite{Shen2017}, their region-by-region scheme is in fact multiple but each participant can stuck the shares in hand into one share with several regions.
It is easy to check that given an SXVCS we can construct an XVCS region-by-region immediately by just distribute each subpixel to each region. The only difference is that for all $Q\in\Gamma_{Qual}$, the recovered image of us has no noise region but just white region and SI region while the recovered image of \cite{Shen2017} has noise region.
Here we compare our $(2,4)$-XVCS with \cite{Shen2017} in Figure \ref{SemiSXVCS and SXVCS region}. Imagine that the finger point is replaced by noise image, our scheme still works but \cite{Shen2017}'s not. Also, the pixel expansion of our region by region scheme is $\lceil\log_{2}n\rceil$ while \cite{Shen2017}'s is $\lceil n\rceil$



 		\section{Conclusion and Future Work}
            
In this paper, we establish some basic theory of constructing XVCS through linear equations systems (or matrix equations), and prove that any XVCS can be constructed from $2k$ linear equations systems. Specifically, we extract XVCS constructed from only two linear equations systems as XVCS without noise, or SXVCS, and study its basic properties. We obtain the equivalence theorem among SXVCS, SemiSXVCS, PW-SXVCS and PW-XVCS, and thus introduce the idea of optimizing PW-XVCS construction through SXVCS. Inspired by this idea, we obtain the construction method of optimal (2.n)-XVCS through some further simple calculations. The (2,n)-XVCS constructed by this method performs very well in most aspects.

Overall, there have been some good progress in the study of XVCS in noise recovery of images in this paper. We have revealed which access structures have SXVCS and demonstrated that this property can significantly optimize solutions by convert PW-XVCS to SXVCS.


However, there are still intriguing questions to be addressed. For instance, the assertion "The existence of SXVCS is equivalent to the existence of PW-XVCS" effectively transforms the problem related to PW-XVCS into an issue concerning SXVCS. Even though SXVCS is easier to study, we have yet to provide complete characterization of access structures that clarifies which structures can support SXVCS or PW-XVCS. Furthermore, in terms of constructing the optimal $(2,n)$-XVCS, we have laid out all the achievable results in the table, yet some parameters remain challenging to optimize. These aforementioned unresolved problems are exciting directions for future exploration.


 \bibliography{sn-bibliography}

\end{document}